\documentclass[pdflatex,sn-aps]{sn-jnl}

\usepackage{graphicx}%
\usepackage{multirow}%
\usepackage{amsmath,amssymb,amsfonts}%
\usepackage{amsthm,bbm}%
\usepackage{mathrsfs}%
\usepackage[title]{appendix}%
\usepackage{xcolor}%
\usepackage{textcomp}%
\usepackage{manyfoot}%
\usepackage{booktabs}%
\usepackage{algorithm}%
\usepackage{algorithmicx}%
\usepackage{algpseudocode}%
\usepackage{listings}%
\usepackage{mathtools}%
\usepackage{subcaption}%
\usepackage{tikz}%
\usetikzlibrary{arrows.meta}
\usepackage[linguistics]{forest}
\newcommand{\pc}{\mathit{PC}}
\tikzset{
  nd/.style={draw, rounded corners=2pt, inner sep=3pt,
    font=\small, minimum height=1.4em},
  hi/.style={nd, line width=1.5pt, fill=red!8},
  ok/.style={nd, fill=green!10},
  sub/.style={nd, densely dashed, fill=blue!5},
  unx/.style={nd, dotted},
  cl/.style={font=\scriptsize, fill=white, inner sep=1pt},
  arr/.style={-{Stealth[length=3pt]}},
  varr/.style={-{Stealth[length=3pt]}},
}
\usepackage{enumitem}
\usepackage{cleveref}
\usepackage{xspace}

\crefname{lemma}{lemma}{lemmas}
\Crefname{lemma}{Lemma}{Lemmas}

\theoremstyle{thmstyleone}%
\newtheorem{theorem}{Theorem}%
\newtheorem{proposition}[theorem]{Proposition}%
\newtheorem{lemma}[theorem]{Lemma}%

\theoremstyle{thmstyletwo}%
\newtheorem{example}{Example}%

\theoremstyle{thmstylethree}%
\newtheorem{definition}{Definition}%

\newcommand{\eldarica}{\textsc{Eld\-arica}\xspace}

\newcommand{\loat}{\textsc{LoAT}\xspace}
\newcommand{\klee}{\textsc{KLEE}\xspace}
\newcommand{\golem}{\textsc{Golem}\xspace}
\newcommand{\spacer}{\textsc{Z3/Spacer}\xspace}

\newcommand{\eldabs}{\texttt{CEGAR}\xspace}
\newcommand{\eldabsoff}{\texttt{CEGAR$^-$}\xspace}
\newcommand{\eldsymexbfs}{\texttt{FE}$_{\mathtt{BFS}}$\xspace}
\newcommand{\eldsymexdfs}{\texttt{FE}$_{\mathtt{DFS}}$\xspace}
\newcommand{\eldsymexsld}{\texttt{BE}$_{\mathtt{BFS}}$\xspace}

\newcommand{\false}{\ensuremath{\bot}}
\newcommand{\true}{\ensuremath{\top}}

\newcommand{\cmark}{\leftarrow} %
\newcommand{\csep}{\mathbin{\land}} %
\newcommand{\chc}[3]{#1 \cmark #2 \csep #3}
\newcommand{\chcfact}[2]{#1 \cmark #2}
\newcommand{\chcgoal}[2]{\false \cmark #1 \csep #2}

\newcommand{\achc}[3]{#1 &\cmark #2 \csep #3}
\newcommand{\achcfact}[2]{#1 &\cmark #2}
\newcommand{\achcgoal}[2]{\false &\cmark #1 \csep #2}

\newcommand{\calC}{\mathcal{C}}
\newcommand{\calF}{\mathcal{F}}
\newcommand{\calG}{\mathcal{G}}
\newcommand{\calM}{\mathcal{M}}
\newcommand{\calP}{\mathcal{P}}
\newcommand{\calR}{\mathcal{R}}
\newcommand{\calS}{\mathcal{S}}
\newcommand{\calT}{\mathcal{T}}
\newcommand{\calU}{\mathcal{U}}
\newcommand{\calX}{\mathcal{X}}

\newcommand{\HR}{\mathit{HR}}
\newcommand{\Conf}{\mathit{Conf}}

\newcommand{\resolve}[2]{%
  \dfrac{\displaystyle #1}{\displaystyle #2}}
\newcommand{\resolvecond}[3]{%
  \dfrac{\displaystyle #1}{\displaystyle #2}
  \;\; #3}

\lstdefinestyle{progstyle}{
  language=C,
  basicstyle=\small\ttfamily,
  keywordstyle=\bfseries\color{violet},
  commentstyle=\itshape\color{gray},
  stringstyle=\color{teal}, 
  numbers=none,
  frame=single,
  xleftmargin=1em,
  xrightmargin=1em,
  aboveskip=0.5em,
  belowskip=0.5em,
  morekeywords={assert,assume},
  emph={int,void},
  emphstyle=\color{blue!70!black}\bfseries,
  columns=flexible,
  keepspaces=true,
  literate={*}{{\textasteriskcentered}}1
}

\hypersetup{
  pdftitle={Symbolic Execution of Constrained Horn Clauses},
  pdfauthor={Johannes Weiser, Zafer Esen, Philipp R\"ummer},
  pdfsubject={Constrained Horn clauses and symbolic execution},
  pdfkeywords={Constrained Horn clauses, Symbolic execution, Software verification, Resolution, SLD, k-induction, Incorrectness logic}
}

\begin{document}

\title[Symbolic Execution of Constrained Horn Clauses]{Symbolic Execution of Constrained Horn Clauses}

\author[1]{\fnm{Johannes} \sur{Weiser}}
\email{johannes.weiser@tuwien.ac.at}

\author[2]{\fnm{Zafer} \sur{Esen}}
\email{zafer.esen@it.uu.se}

\author[2,3]{\fnm{Philipp} \sur{R\"ummer}}
\email{philipp.ruemmer@it.uu.se}

\affil[1]{\orgname{TU Vienna}, \orgaddress{\country{Austria}}}

\affil[2]{\orgname{Uppsala University}, \orgaddress{\country{Sweden}}}

\affil[3]{\orgname{University of Regensburg}, \orgaddress{\country{Germany}}}

\abstract{
Constrained Horn clauses (CHCs) are a fragment of first-order logic
widely used as an intermediate language for verification. We study \emph{constrained resolution} as a calculus for reasoning about the satisfiability of sets of CHCs. Constrained resolution forms a simple alternative to model checking algorithms such as CEGAR and IC3 and is here shown to exhibit complementary features. We show that, when CHCs encode programs, forward symbolic execution corresponds to positive unit hyper-resolution, and backward symbolic execution to SLD resolution, which are special cases of constrained resolution. We formalize forward and backward reasoning for linear and non-linear CHCs and transition systems, prove refutational completeness under fair strategies, and develop subsumption criteria for pruning redundant derivations. We moreover show that backward resolution with subsumption generalizes $k$-induction to non-linear CHCs, and that forward constrained resolution is related to the problem of constructing proofs in incorrectness logic. Lastly, we provide an experimental evaluation in the \eldarica CHC solver on benchmarks from CHC-COMP.
}

\keywords{Constrained Horn clauses, Symbolic execution,
          Software verification, Resolution, SLD, k-induction, Incorrectness logic}

\maketitle

\section{Introduction}\label{sec:intro}

Constrained Horn clauses (CHCs) are a fragment of
first-order logic that has found wide use as an
intermediate language for verification, with applications
ranging from security analysis for smart contracts to the verification of programs in various
languages~\citep{DBLP:conf/ndss/KalraGDS18,DBLP:conf/cav/BritikovZFAS24,DBLP:conf/pldi/GrebenshchikovLPR12,DBLP:conf/cav/GurfinkelKKN15,DBLP:conf/esop/0002T020,DBLP:journals/pacmpl/OgawaSU25,DBLP:conf/lpar/KahsaiKRS17,DBLP:conf/cav/WolffGEHRW25,DBLP:conf/pepm/SatoI019}. CHCs enable the modular construction of verification tools by decoupling the development of verification \emph{front-ends,} which take care of parsing and processing programs, applying the rules of some program logic, from CHC solvers as verification \emph{back-ends} that carry out the required semantic program analysis.
When encoding a program using CHCs, predicates typically represent the required annotations, such as state invariants, loop invariants, or function contracts, while CHCs encode program transitions as well as all further constraints on the annotations. A program is safe when the corresponding
set of CHCs is satisfiable.

To analyze CHCs, CHC solvers such as \eldarica~\citep{DBLP:conf/fmcad/HojjatR18}, \golem~\citep{DBLP:conf/cav/BlichaBS23}, \spacer~\citep{DBLP:conf/cav/KomuravelliGC14}, and \loat~\cite{DBLP:conf/cade/FrohnG22} implement a variety of model checking algorithms, for instance IC3~\citep{DBLP:conf/vmcai/Bradley11}, CEGAR~\citep{DBLP:conf/cav/ClarkeGJLV00}, loop acceleration~\citep{DBLP:conf/sas/FrohnG23}, bounded model checking~\citep{DBLP:conf/tacas/BiereCCZ99}, or transition power abstraction~\citep{DBLP:conf/tacas/BlichaFHS22}. As a simple alternative to those algorithms, in this article we study \emph{constrained resolution}~\citep{DBLP:journals/ai/Burckert94}, a logical proof rule that enables solvers to derive new clauses from a set of existing clauses in forward or backward direction. As our first insight, we show that strategies to apply constrained resolution correspond to \emph{symbolic execution,} a widely used approach in program analysis in which
some program variables are initialized with symbols that represent arbitrary values, rather than concrete ones~\citep{DBLP:conf/ibm/King74,DBLP:journals/cacm/King76}. Although more limited for proving the correctness of programs than algorithms used in software model checking, for instance IC3 or CEGAR, constrained resolution inherits the feature of symbolic execution of being very effective at discovering bugs in programs, which here corresponds to proving the \emph{unsatisfiability} of CHCs. Since constrained resolution does not rely on relatively expensive steps such as Craig interpolation~\cite{craig1957linear,DBLP:journals/corr/abs-2602-08532}, it can be implemented easily even in the context of complex theories like arrays or bit-vectors. We show that constrained resolution is also surprisingly effective at proving programs correct, reflecting experiences in software model checking~\cite{DBLP:conf/sigsoft/DietschHMNP17}.

As our second insight, we show that there is a correspondence between different resolution strategies and known forms of symbolic execution: \emph{forward} symbolic execution corresponds to \emph{constrained positive unit hyper-resolution,} whereas \emph{backward} symbolic execution~\cite{DBLP:conf/pldi/ChandraFS09,GulwaniJuvekar2009} corresponds to \emph{SLD resolution}~\cite{DBLP:journals/jacm/AptE82} known from logic programming~\cite{DBLP:journals/ai/KowalskiK71}.
Discussion of this correspondence goes back to the work of Kowalski, who observed that resolution steps over Horn clauses can be interpreted as procedure invocations~\cite{DBLP:conf/ifip/Kowalski74}, and to work in constraint logic programming~\citep{DBLP:conf/popl/JaffarL87}.
We extend this view to constrained resolution with background theories and make the connection to symbolic execution explicit.
We formalize both forward and backward resolution strategies over linear and non-linear CHCs, prove their correctness, and study subsumption for pruning redundant derivations.

Our third insight is that backward resolution combined with subsumption generalizes \emph{$k$-induction,} a proof rule for invariants in which the consecution condition assumes $k$ correct iterations of a loop~\cite{DBLP:conf/fmcad/SheeranSS00}. It is known that $k$-induction is a very effective rule for proving safety properties in both hardware and software verification (e.g., \cite{DBLP:conf/fmcad/SheeranSS00,DBLP:conf/sas/DonaldsonHKR11,DBLP:conf/cav/0001DW15}). We show that backward reasoning through constrained resolution has the effect of implicitly constructing the conditions that need to be checked when applying $k$-induction. A similar observation was made for the relationship between $k$-induction and backward symbolic execution in~\cite{DBLP:conf/sas/ChalupaS21}. Constrained resolution is more general than $k$-induction, however, since it applies also in the case of non-linear clauses.

As our fourth insight, we study the relationship between constrained resolution and \emph{incorrectness logic,} which is a logic that is dual to Hoare logic for reasoning about failing executions of programs~\cite{DBLP:journals/pacmpl/OHearn20}. We show that forward resolution proofs for CHCs encoding a program naturally translate to proofs in incorrectness logic, making it possible, in particular, to automatically construct \emph{non-ground} incorrectness proofs automatically. The relationship between CHC proofs and incorrectness reasoning also makes it possible to translate bespoke incorrectness proof rules, for instance for loop acceleration, to the CHC world.

The main contributions of this article are:
\begin{enumerate}[label=(\arabic*)]
    \item A formalization of symbolic execution as constrained resolution over CHCs, including subsumption criteria for CHCs (\Cref{sec:chc-symex} and \Cref{sec:subsumption}).
    \item A study of forward and backward exploration strategies over both linear and
    non-linear CHCs (\Cref{sec:forward} and \Cref{sec:backward}).
    \item Discussion how the $k$-induction proof rule can be reduced to backward resolution with clause subsumption (\Cref{sec:k-induction}) and of the connection between constrained resolution and incorrectness logic (\Cref{sec:incorrectness}).
    \item An experimental evaluation on CHC-COMP benchmarks (\Cref{sec:experiments}).
\end{enumerate}

\subsection{Motivating Example}
\label{sec:motivating}

Consider the program in \Cref{fig:program} (left side). The variable~\texttt{x} is initialized to a nondeterministic value, and the \texttt{while (*)} loop, which may iterate any number of times, alternates between adding~2 and subtracting~1 in each iteration. The program refers to a  placeholder predicate~\texttt{P(x,~x0)}, which will later be replaced with the
formulas \texttt{x = x0} and \texttt{x >= x0}. We can observe that the program is \emph{safe} if \texttt{P(x,~x0)} is interpreted as \texttt{x >= x0}: the loop body increases~\texttt{x} by a net~$1$ per two iterations, so \texttt{x >= x0} holds whenever the loop is left. If \texttt{P(x,~x0)} is interpreted as \texttt{x = x0}, the assertion fails in every execution in which the loop iterates at least once.

The translation of the program to a set of CHCs is shown in \Cref{fig:program} (right side). The encoding uses four clauses: a \emph{fact}~\eqref{cl:init}, representing the set of states at program entry, \emph{rules}~\eqref{cl:add} and~\eqref{cl:sub} representing the loop body, and a \emph{goal}~\eqref{cl:goal} encoding the assertion. The predicate~$q$ represents a loop invariant and has three arguments: the current value of~$x$, the flag~$b$, and the initial value of~$x$ stored in~$x_0$.

To analyze the CHCs, we use  \emph{constrained resolution}~\citep{DBLP:journals/ai/Burckert94}, a variant of resolution~\citep{DBLP:journals/jacm/Robinson65} in which syntactic unification is replaced by equality constraints:
\begin{equation*}
\resolve{
  \chc{p(\bar{s})}{B_1}{\varphi_1}
  \qquad
  \chc{H}{p(\bar{t}) \csep B_2}{\varphi_2}
}{
  \chc{H}{B_1 \csep B_2}{\bar{s} = \bar{t} \csep \varphi_1 \csep \varphi_2}
}
\end{equation*}
The rule derives a new clause by replacing the matched atom~$p(\bar{t})$ in the body of the right premise with the body and constraints of the left premise. The equality $\bar{s} = \bar{t}$ is used instead of syntactic unification because clause terms may involve theory-interpreted symbols, such as in $x + 2$, whose equality is delegated to a theory solver. The clause bodies~$B_1, B_2$ are (possibly empty) conjunctions of atoms,
and $\varphi_1, \varphi_2$ are constraints over the background theory; in the examples, the clauses are formulated over linear integer arithmetic (LIA) as the background theory. 

\begin{figure}[tb]
\centering
\begin{minipage}[t]{0.38\linewidth}
\begin{lstlisting}[style=progstyle,numbers=left,
  numberstyle=\tiny\color{gray},frame=none,
  xleftmargin=1.5em,xrightmargin=0pt]
int x := *, b := 0;
int x0 := x;
while (*) {
  if (b = 0) x := x + 2;
  else       x := x - 1;
  b := 1 - b;
}
assert(P(x, x0));
\end{lstlisting}
\end{minipage}
\hfill
\begin{minipage}[t]{0.58\linewidth}
\vspace{0pt}
\begin{align}
  q(x, b, x_0) &\cmark x = x_0 \csep b = 0
    \label{cl:init} \\[2pt]
  q(x + 2, 1, x_0) &\cmark q(x, b, x_0) \csep b = 0
    \label{cl:add} \\[2pt]
  q(x - 1, 0, x_0) &\cmark q(x, b, x_0) \csep b = 1
    \label{cl:sub} \\[2pt]
  P(x, x_0) &\cmark q(x, b, x_0)
    \label{cl:goal}
\end{align}
\end{minipage}
\caption{A program with a nondeterministic loop that either
  adds~$2$ to or subtracts~$1$ from~\texttt{x} in each iteration
  (left), with its CHC encoding (right). The construct
  \texttt{while~(*)} denotes a loop that may execute any number
  of times. The variable~\texttt{x} is initialized to a
  nondeterministic value, and~\texttt{x0} records its initial
  value. Predicate $q(x, b, x_0)$ represents the set of states
  at the loop head, with~$x_0$ tracking the initial value
  of~$x$. The assertion contains a placeholder formula
  \texttt{P(x,~x0)} in the program; in~\eqref{cl:goal} the
  formula is represented by the predicate $P(x, x_0)$, which is
  instantiated as $x = x_0$ in
  Figure~\ref{fig:forward-unsafe}, and as $x \geq x_0$ in
  Figures~\ref{fig:forward-safe},~\ref{fig:backward}.}
\label{fig:program}
\end{figure}

\begin{figure}[tb]
\begin{center}
\begin{tikzpicture}
\node[nd] (a1) at (0, 0) {$q(x_0, 0, x_0)$};
\node[nd] (a2) at (3.8, 0) {$q(x_0 + 2, 1, x_0)$};
\node[cl] at ([xshift=-0.8em]a1.west) {\eqref{cl:init}};
\draw[arr] (a1) -- (a2)
  node[cl, midway, above] {\eqref{cl:add}};
\node[ok] (b1) at (0, -1.3) {$x = x_0 \cmark x = x_0$};
\node[hi] (b2) at (3.8, -1.3) {$x = x_0 \cmark x = x_0 + 2$};
\draw[varr] (a1) -- (b1)
  node[cl, midway, right] {\eqref{cl:goal}};
\draw[varr] (a2) -- (b2)
  node[cl, midway, right] {\eqref{cl:goal}};
\end{tikzpicture}
\end{center}
\caption{Forward execution with $P(x, x_0) \equiv (x = x_0)$.
  The clause $x = x_0 \cmark x = x_0 + 2$ (thick red border) is unsat, corresponding to an assertion violation after one iteration.}
\label{fig:forward-unsafe}
\end{figure}

Choosing $P(x, x_0) \equiv (x = x_0)$, the program has a bug: the assertion can fail because after one loop iteration the value of~$x$ changes and is no longer equal to $x_0$. Such a failing execution can be found using \emph{forward symbolic execution,} running the program with symbolic initial value~$x$ and checking whether any feasible paths to a failing assertion can be found. The resolution diagram in \Cref{fig:forward-unsafe} illustrates how constrained resolution mimics forward symbolic execution, discovering the bug in two steps. Each arrow in the figure represents an application of the constrained resolution rule; for readability, the diagram shows simplified constraints in which equalities have been inlined. Resolving~\eqref{cl:init} with~\eqref{cl:add} yields the derived clause $q(x_0 + 2, 1, x_0)$; resolving this clause with~\eqref{cl:goal} produces $x = x_0 \cmark x = x_0 + 2$. Since $x_0 + 2 \neq x_0$ and the variables~$x, x_0$ are implicitly universally quantified, the clause obtained in this last step is unsatisfiable; the resolution proof shows that the CHCs~\eqref{cl:init}, \eqref{cl:add}, \eqref{cl:goal} are unsatisfiable, which is equivalent to the program having failing executions.

There is a direct correspondence between the applied resolution steps and the execution of program statements: the unit clause~$q(x_0, 0, x_0)$ is the symbolic initial state of the program, while $q(x_0 + 2, 1, x_0)$ is the symbolic state after one loop iteration. Branches in the symbolic execution, i.e., symbolic states in which multiple program statements are feasible, correspond to branches in the resolution diagram. The forward resolution diagram corresponds exactly to the \emph{symbolic execution tree} of a program~\cite{DBLP:journals/cacm/King76}.

\begin{figure}[tb]
\begin{center}
\begin{tikzpicture}[every node/.append style={font=\footnotesize}]
\node[nd] (a1) at (0, 0) {$q(x_0, 0, x_0)$};
\node[nd] (a2) at (3.0, 0) {$q(x_0{+}2, 1, x_0)$};
\node[nd] (a3) at (6.0, 0) {$q(x_0{+}1, 0, x_0)$};
\node[nd] (a4) at (9.0, 0) {$q(x_0{+}3, 1, x_0)$};
\node[font=\normalsize, gray] at (10.9, 0) {$\cdots$};
\node[cl] at ([xshift=-0.8em]a1.west) {\eqref{cl:init}};
\draw[arr] (a1) -- (a2)
  node[cl, midway, above] {\eqref{cl:add}};
\draw[arr] (a2) -- (a3)
  node[cl, midway, above] {\eqref{cl:sub}};
\draw[arr] (a3) -- (a4)
  node[cl, midway, above] {\eqref{cl:add}};
\draw[arr] (a4) -- (10.5, 0);
\node[ok] (b1) at (0, -1.3) {$x {\geq}\, x_0 \cmark x {=}\, x_0$};
\node[ok] (b2) at (3.0, -1.3) {$x {\geq}\, x_0 \cmark x {=}\, x_0{+}2$};
\node[ok] (b3) at (6.0, -1.3) {$x {\geq}\, x_0 \cmark x {=}\, x_0{+}1$};
\node[ok] (b4) at (9.0, -1.3) {$x {\geq}\, x_0 \cmark x {=}\, x_0{+}3$};
\node[font=\normalsize, gray] at (10.9, -1.3) {$\cdots$};
\draw[varr] (a1) -- (b1)
  node[cl, midway, right] {\eqref{cl:goal}};
\draw[varr] (a2) -- (b2)
  node[cl, midway, right] {\eqref{cl:goal}};
\draw[varr] (a3) -- (b3)
  node[cl, midway, right] {\eqref{cl:goal}};
\draw[varr] (a4) -- (b4)
  node[cl, midway, right] {\eqref{cl:goal}};
\end{tikzpicture}
\end{center}
\caption{Forward execution with $P(x, x_0) \equiv (x \geq x_0)$.
  Every check holds (green), but the chain grows indefinitely.}
\label{fig:forward-safe}
\end{figure}

We now consider the safe case $P(x, x_0) \equiv (x \geq x_0)$, for which a resolution diagram is shown in \Cref{fig:forward-safe}. We can observe that none of the derived clauses is unsatisfiable, but the chain of derived clauses grows indefinitely. Forward resolution can conclude neither safety nor unsafety for this program.

\begin{figure}[tb]
\begin{center}
\begin{tikzpicture}[every node/.append style={font=\scriptsize}]
\node[nd] (a1) at (0, 0)
  {$x \geq x_0 \cmark q(x, b, x_0)$};
\node[cl] at ([xshift=-0.8em]a1.west) {\eqref{cl:goal}};
\node[nd] (a2) at (4.3, 0)
  {$x \geq x_0{+}1 \cmark q(x, b, x_0) \csep b {=} 1$};
\node[sub] (a3) at (9.2, 0)
  {$x \geq x_0{-}1 \cmark q(x, 0, x_0)$};

\draw[arr] (a1) -- (a2)
  node[cl, midway, above] {\eqref{cl:sub}};
\draw[arr] (a2) -- (a3)
  node[cl, midway, above] {\eqref{cl:add}};

\node[sub] (a0) at (4.3, 1.3)
  {$x \geq x_0{-}2 \cmark q(x, b, x_0) \csep b {=} 0$};
\draw[varr] (a1) -- (a0)
  node[cl, midway, right] {\eqref{cl:add}};

\node[ok] (b1) at (0, -1.3)
  {$x \geq x_0 \cmark x {=}\, x_0 \csep b {=} 0$};
\node[ok] (b2) at (4.3, -1.3)
  {$x \geq x_0{+}1 \cmark x {=}\, x_0 \csep b {=} 0 \csep b {=} 1$};

\draw[varr] (a1) -- (b1)
  node[cl, midway, right] {\eqref{cl:init}};
\draw[varr] (a2) -- (b2)
  node[cl, midway, right] {\eqref{cl:init}};
\end{tikzpicture}
\end{center}
\caption{Backward execution with $P(x, x_0) \equiv (x \geq x_0)$.
  Clauses in dashed nodes are subsumed by~\eqref{cl:goal}. All derivable clauses are satisfiable, proving that the set of clauses is satisfiable and that the program is safe.}
\label{fig:backward}
\end{figure}

\Cref{fig:backward} applies \emph{backward resolution} to the same property $P(x, x_0) \equiv (x \geq x_0)$. Starting from the goal~\eqref{cl:goal}, resolving with~\eqref{cl:sub} yields $x \geq x_0 + 1 \cmark q(x, b, x_0) \csep b = 1$: for any predecessor state where $b = 1$, i.e., where the subtraction branch was taken, $x$ must be at least $x_0 + 1$ to ensure $x - 1 \geq x_0$ after the step. Resolving the original clause with~\eqref{cl:add} yields $x \geq x_0 - 2 \cmark q(x, b, x_0) \csep b = 0$ (dashed node). This clause is \emph{subsumed} by~\eqref{cl:goal}: the body of~\eqref{cl:goal} is more general, since it places no constraint on~$b$, and its constraint $x \geq x_0$ entails $x \geq x_0 - 2$. Every state that can reach an assertion violation through the subsumed clause can also reach it through~\eqref{cl:goal}, so the subsumed clause adds no information. The same holds for $x \geq x_0 - 1 \cmark q(x, 0, x_0)$, obtained by resolving the second derived clause with~\eqref{cl:add}. We define subsumption formally in \Cref{sec:subsumption}.

The two checks against~\eqref{cl:init} (green nodes) confirm that no contradiction can be derived when taking the initial state into account. Resolving~\eqref{cl:goal} with~\eqref{cl:init} gives the clause $x \geq x_0 \cmark x = x_0 \csep b = 0$, which holds since $x_0 \geq x_0$. Resolving the second derived clause with~\eqref{cl:init} gives $x \geq x_0 + 1 \cmark x = x_0 \csep b = 0 \csep b = 1$, which is vacuously valid since $b = 0 \land b = 1$ is unsatisfiable. All derived clauses are either subsumed or discharged, so backward resolution terminates and proves safety.

Each backward resolution step unwinds the transition relation by one step from the assertion. This corresponds to $k$-induction~\citep{DBLP:conf/fmcad/SheeranSS00,DBLP:conf/sas/ChalupaS21}: the base case verifies that the property holds at the initial state, and the induction step verifies that no new proof obligations arise beyond those already covered. In this example, two backward steps suffice ($k = 2$). We formalize this correspondence in \Cref{sec:k-induction}.

\section{Preliminaries}\label{sec:prelim}
\subsection{First-Order Logic}\label{sec:fol}
We work in a many-sorted first-order logic setting.
A signature is a tuple 
$\Sigma = (\calS, \calF, \calP, \delta)$, where
$\calS$ is a set of sorts, $\calF$ is a set of function
symbols, $\calP$ is a set of predicate symbols, and
$\delta$ provides the sort signatures for functions and predicates.
We assume a countably infinite set of sorted variables $\calX$.
A \emph{term} is either a variable or the application of a function symbol
to terms of appropriate sorts. 
An atomic formula (or \emph{atom}) is the application of a 
predicate symbol to terms of appropriate sorts. 
An atom is \emph{ground} if it contains no variables.
Formulas are constructed from atoms using
standard logical connectives ($\neg, \land, \lor, \rightarrow$) and
quantifiers ($\forall, \exists$).
A \emph{literal} is an atom or its negation.  
A \emph{clause} is a disjunction of $n \geq 0$ literals. 
If $n = 0$, the clause is the \emph{empty clause} (\false).
If $n = 1$, the clause is a \emph{unit clause}.
A unit clause with a non-negated literal is \emph{positive},
and one with a negated literal is \emph{negative}.
A clause that contains only negated literals is a goal clause.
A sentence is a formula with no free-variables, i.e.\ every variable is quantified. 

A $\Sigma$-structure $M = (\calU, I)$ is given by a domain $\calU = \bigcup_{S \in \calS} U_S$ and an interpretation $I$ s.t.\ for any sort $S \in \calS$, $I(S) = U_S$, for any predicate symbol $P \in \calP$ of type $\prod_{i=1}^n S_i$, $I(P) \subseteq \prod_{i=1}^n U_{S_i}$ and for any function symbol $f \in \calF$ of type $\prod_{i=1}^n S_i \to S$, $I(f)$ is in $\prod_{i=1}^n U_{S_i} \to U_S$.
If $\Sigma_1 \subseteq \Sigma_2$, we can extend each $\Sigma_1$-structure $M$ to a $\Sigma_2$ structure $M'$ in the natural way by adding the necessary interpretations of sorts, function symbols and predicate symbols. 

A variable assignment $\beta$ is a function from $\calX$ to $\calU$ mapping variables to elements of the appropriate sorts. 
Given a structure $M$ and a variable assignment $\beta$, the evaluation function $val_{M, \beta}$, mapping terms to elements of the domain and formulas to $\{true, false\}$ is defined as usual.
If $F$ is a formula and $val_{M, \beta}(F) = true$, we write $M, \beta \vDash F$.
If $M,\beta \vDash F$ for all variable assignments $\beta$, then $M \vDash F$. 
Note that the evaluation of a sentence $\gamma$ is independent of the variable assignment $\beta$ as there are no free variables.
If $\Gamma$ is a set of sentences and $M \vDash \gamma$ for every $\gamma \in \Gamma$, we write $M \vDash \Gamma$.

A \emph{theory}~$\calT$ is given by a signature~$\Sigma_\calT$ and a class~$\mathcal{M}_\calT$ of $\Sigma_\calT$-structures. 
Given a signature $\Sigma^* \supseteq \Sigma_\calT$, a set of $\Sigma^*$-sentences~$\Gamma$ and a $\Sigma^*$-formula~$F$, we write $\Gamma \vDash_\calT F$ if for every $M \in \calM_\calT$ and every extension $M'$ of $M$ to a $\Sigma^*$-structure, it holds that $M' \vDash F$. 
Throughout the article, we fix a background theory~$\calT$ that includes the equality operator~$=$ and write $\vDash$ as shorthand for $\vDash_\calT$.

\subsection{Constrained Horn Clauses (CHCs)}\label{sec:chc}
A \emph{Constrained Horn Clause} (CHC) is a clause with at most
one positive literal and a constraint over the background theory $\calT$. In implication form, a CHC
is written as:
\begin{equation*}
    \forall \bar{x}.\ \chc{H}{B}{\varphi}
\end{equation*}
where $H$ (the \emph{head}) is either an atom or $\false$,
$B$ (the \emph{body}) is a possibly empty conjunction of
uninterpreted atoms 
$p_1(\bar{t}_1) \land \cdots \land p_n(\bar{t}_n)$,
$\varphi$ is a $\calT$-constraint, and $\bar{x}$ is the tuple
of all variables occurring in the clause.

All variables in a CHC are implicitly universally quantified,
we usually omit the~$\forall\bar{x}$ prefix for brevity.
Whenever two CHCs are involved in a resolution
step, we assume that their variable sets are disjoint.  This convention is
implicit in all inference rules presented in this article.

CHCs are classified as follows. A \emph{unit clause} is a clause~$\chc{H}{B}{\varphi}$ in which the head~$H$ and body~$B$ together contain exactly one atom. A \emph{fact} or \emph{positive
unit clause} is a unit clause whose body contains no atoms, i.e.,
it has the form $\chcfact{H}{\varphi}$. A \emph{rule} is a CHC
whose head is an atom (not $\false$) and whose body contains
at least one atom. A \emph{goal} is a clause whose head is~$\false$, and a \emph{negative unit clause} is a goal that is also a unit clause.
We sometimes write goals as
$\psi \cmark B$, where $\psi$ is a constraint, which is equivalent to
$\chcgoal{B}{\neg\psi}$.

A CHC is \emph{linear} if its body contains at most one
uninterpreted atom, and \emph{non-linear} otherwise.
A set of CHCs is linear if each of its elements is linear.
Non-linear CHCs arise naturally when encoding programs with
function calls: the body of the clause at the call site contains
both the predicate for the called function and the predicate for
the continuation after the call. Non-recursive functions can be inlined to recover linearity, but recursive functions inherently result in non-linear clauses.

We work with a \emph{strict} notion of semantics~\cite{BjornerGMR15} and say that a set~$\mathcal{C}$ of CHCs is \emph{$\calT$-satisfiable} (or just \emph{satisfiable}) if $\mathcal{C} \not\models \mathit{false}$, i.e., if it is satisfiable in the sense of first-order logic.

\subsection{Calculi and Resolution}\label{sec:resolution}

The article studies several variants of the classical resolution calculus~\citep{DBLP:journals/jacm/DavisP60,DBLP:journals/jacm/Robinson65} for CHCs. In general, we say that a calculus is \emph{sound} if every derived clause is a
logical consequence of the premises. Deriving the empty clause
($\false$) then establishes that the original set of clauses is
unsatisfiable. The calculus is \emph{refutation-complete}
(referred to as \emph{complete} in the rest of this article) if,
for every unsatisfiable set of clauses, the empty clause can be
derived by finite applications of the calculus rules.

Resolution is an inference system for establishing the
unsatisfiability of a set of clauses. The propositional
resolution principle is based on a simple logical observation:
given two clauses $C \lor A$ and $D \lor \neg A$, where $A$ is an
atomic proposition, one can infer the \emph{resolvent} $C \lor D$
since $A$ and $\neg A$ cannot be true at the same time. If repeatedly
resolving clauses eventually produces the empty clause~$\false$,
the original set of clauses is proven unsatisfiable.

\paragraph{First-Order Resolution}
In first-order logic, the two literals being resolved need not be
identical; it suffices that they can be made identical through \emph{unification}~\citep{DBLP:journals/jacm/Robinson65}, 
a syntactic operation that substitutes variables with terms. 
Let $A\sigma$ denote the application of a substitution $\sigma$ to an atom $A$. A substitution $\sigma$ is a \emph{unifier} of two atoms $A$ and $A'$ if $A\sigma = A'\sigma$. It is a \emph{most general unifier} (MGU), denoted $\sigma = \mathit{mgu}(A, A')$, if for every other unifier $\theta$ of $A$ and $A'$, there exists a substitution $\gamma$ such that $\theta = \sigma\gamma$.
The \emph{first-order resolution} rule
generalizes propositional resolution using unification:
\begin{equation}\label{eq:fores}
    \frac{C \lor A \quad D \lor \neg A'}{(C \lor D)\sigma}
    \qquad \sigma = \mathit{mgu}(A, A')
\end{equation}
where the two premise clauses are assumed to share no variables.

\begin{example}
Consider the clauses $r(x) \lor p(f(x), y)$ and
$s(y) \lor \neg p(f(a), b)$. The atoms $p(f(x), y)$ and
$p(f(a), b)$ have the most general unifier
$\sigma = \{x \mapsto a,\, y \mapsto b\}$. Applying~\eqref{eq:fores} to these clauses using $\sigma$ yields the resolvent $r(a) \lor s(b)$.
\end{example}

\paragraph{Constrained Resolution}

When clauses involve a background theory~$\calT$, terms
may contain theory-interpreted function symbols 
(e.g., terms like $x+1$, $select(a, i)$) for which syntactic unification
is not sufficient.
\emph{Constrained resolution}~\citep{DBLP:journals/ai/Burckert94} extends first-order
resolution by replacing the MGU computation with an equality constraint:
instead of computing a unifier, the rule adds a constraint that the
arguments must be equal, and delegates the task of checking the satisfiability of this equality to a $\calT$-solver, usually an SMT solver. In the context of CHCs, the rule becomes:
\begin{equation}\label{eq:chcres}
\resolve{
  \chc{p(\bar{s})}{B_1}{\varphi_1}
  \qquad
  \chc{H}{p(\bar{t}) \csep B_2}{\varphi_2}
}{
  \chc{H}{B_1 \csep B_2}{\bar{s} = \bar{t} \csep \varphi_1 \csep \varphi_2}
}
\end{equation}
where $\varphi_1$ and $\varphi_2$ are $\calT$ constraints,
$\bar{s} = \bar{t}$ is the equality constraint, and the clauses must again not share any variables.

\begin{example}
Consider the fact $q(x, b, x_0) \cmark x {=} x_0 \csep b {=} 0$ (\eqref{cl:init} from \Cref{fig:program}) and the rule $q(x'{+}2, 1, x_0') \cmark q(x', b', x_0') \csep b' {=} 0$ (\eqref{cl:add} with renamed variables). Constrained resolution matches the head of~\eqref{cl:init} against the body atom of~\eqref{cl:add}, adding the equality constraints $x' {=} x \csep b' {=} b \csep x_0' {=} x_0$. The resolvent is:
\[
q(x'{+}2, 1, x_0') \cmark
 x {=} x_0 \csep b {=} 0 \csep
 x' {=} x \csep b' {=} b \csep x_0' {=} x_0 \csep b' {=} 0
\]
Simplifying by substituting the equalities yields the fact $q(x_0{+}2, 1, x_0) \cmark \true$, which is the first derived clause in \Cref{fig:forward-unsafe}.
\end{example}

\section{Symbolic Execution as Resolution}\label{sec:chc-symex}
Symbolic execution explores a program by maintaining a
\emph{symbolic store,} which is a mapping from program variables to
symbolic expressions, and a \emph{path condition} that
accumulates branch conditions along the explored
path~\citep{DBLP:journals/cacm/King76}. At each
branch point, execution splits into one child per
branch, producing a tree in which every path from the root to a leaf
represents a class of concrete executions sharing the same
control flow. At assertion points, it is checked whether the asserted
property can be violated in the symbolic program state; a positive result yields a failing program execution.

\begin{figure}[tb]
\begin{center}
\begin{minipage}[t]{0.47\linewidth}
\begin{lstlisting}[style=progstyle,numbers=left,
  numberstyle=\tiny\color{gray},frame=none,
  xleftmargin=1.5em,xrightmargin=0pt]
int x := *;
assume(x > 0);
while (x > 0) {
  if (*) x := x - 2;
  else   x := x - 1;
}
assert(x = 0);
\end{lstlisting}
\end{minipage}%
\hfill
\textcolor{gray!40}{\vrule}%
\hfill
\begin{minipage}[t]{0.47\linewidth}
\vspace{0pt}
\begin{align}
  p(x) &\cmark x > 0
    \label{se:init} \\
  p(x - 2) &\cmark p(x) \csep x > 0
    \label{se:step1} \\
  p(x - 1) &\cmark p(x) \csep x > 0
    \label{se:step2} \\
  x = 0 &\cmark p(x) \csep x \leq 0
    \label{se:goal}
\end{align}
\end{minipage}

\medskip
\begin{minipage}[t]{0.47\linewidth}
\centering
\begin{forest}
  for tree={
    draw, rounded corners=2pt, inner sep=3pt,
    font=\scriptsize, align=center,
    edge={->, >=stealth},
    l sep=0.9cm, s sep=0.3cm,
  }
  [{$[x \mapsto x_0]$\\$\pc\colon x_0 > 0$}
    [{$[x \mapsto x_0]$\\$\pc\colon x_0 > 0$\\$\csep\; x_0 \leq 0$}, ok,
      edge label={node[midway,font=\tiny,fill=white,inner sep=1pt]
        {$x{=}0\;?$}}]
    [{$[x \mapsto x_0{-}2]$\\$\pc\colon x_0 > 0$}
      [{$[x \mapsto x_0{-}2]$\\$\pc\colon x_0 > 0$\\$\csep\; x_0 \leq 2$},
        hi,
        edge label={node[midway,font=\tiny,fill=white,inner sep=1pt]
          {$x{=}0\;?$}}]
      [{$\vdots$}, draw=none,
        edge={dotted,->,>=stealth}]
      [{$\vdots$}, draw=none,
        edge={dotted,->,>=stealth}]
    ]
    [{$[x \mapsto x_0{-}1]$\\$\pc\colon x_0 > 0$}, unx,
      edge={dotted,->,>=stealth},
      [{$\vdots$}, draw=none,
        edge={dotted,->,>=stealth}]
    ]
  ]
\end{forest}

\smallskip
{\footnotesize (a) Depth-first symbolic execution.}
\end{minipage}%
\hfill
\textcolor{gray!40}{\vrule}%
\hfill
\begin{minipage}[t]{0.47\linewidth}
\centering
\begin{forest}
  for tree={
    draw, rounded corners=2pt, inner sep=3pt,
    font=\scriptsize, align=center,
    edge={->, >=stealth},
    l sep=0.9cm, s sep=0.3cm,
  }
  [{$p(x)$\\$\cmark\; x > 0$}, name=root
    [{$x = 0$\\$\cmark\; \false$}, ok,
      edge label={node[midway,left,font=\tiny,xshift=-0.5em]{\eqref{se:goal}}}
    ]
    [{$p(x{-}2)$\\$\cmark\; x > 0$},
      edge label={node[midway,right,font=\tiny]{\eqref{se:step1}}}
      [{$x = 2$\\$\cmark\; x > 0$\\$\csep\; x \leq 2$}, hi,
        edge label={node[midway,left,font=\tiny]{\eqref{se:goal}}}
      ]
      [{$\vdots$}, draw=none,
        edge={dotted,->,>=stealth}]
      [{$\vdots$}, draw=none,
        edge={dotted,->,>=stealth}]
    ]
    [{$p(x{-}1)$\\$\cmark\; x > 0$}, unx,
      edge={dotted,->,>=stealth},
      edge label={node[midway,right,font=\tiny,xshift=0.5em]{\eqref{se:step2}}}
      [{$\vdots$}, draw=none,
        edge={dotted,->,>=stealth}]
    ]
  ]
  \node[font=\tiny] at ([xshift=-0.8em]root.west)                             {\eqref{se:init}}; 
\end{forest}

\smallskip
{\footnotesize (b) Depth-first forward unit resolution.}
\end{minipage}
\end{center}

\caption{(a)~Each node shows a symbolic store and path
  condition (PC). The PC accumulates branch conditions.
  At green and red nodes, it is checked whether the property $x = 0$ holds in the symbolic state. The leftmost green node has an unsatisfiable PC, corresponding to exiting the loop immediately, which means that the assertion cannot be reached on this path. The thick red node shows a feasible path on which the asserted property can be
  violated, a counterexample is $x_0 = 1, x = -1$.
  (b)~Clauses encode the symbolic store as arguments of predicates and the PC as the constraint.
  An infeasible path yields clauses with unsatisfiable constraint. The goal
  clause~\eqref{se:goal} is resolved with every fact,
  including \eqref{se:init} and any resolvents, and the search terminates
  if a contradiction is derived (thick red node). Both for symbolic execution and depth-first forward unit resolution, completeness depends on the strategy: for instance, DFS would
  not terminate if the rightmost branch (line 5 in the program, or \eqref{se:step2}) were chosen in every iteration.}
\label{fig:symex-resolution}
\end{figure}

Symbolic execution is illustrated in \Cref{fig:symex-resolution}~(a) for a program that is similar to the motivating example. The initial symbolic program state is represented by the root of the tree, whereas the intermediate nodes represent possible states reached by the program during its execution. Nodes with unsatisfiable path condition represent \emph{infeasible} paths that cannot occur in concrete program execution. Red nodes mark feasible symbolic states at which an assertion can be violated. In general, the symbolic execution tree of a program is finitely branching, but might contain infinite paths corresponding to non-terminating program executions.

\Cref{fig:symex-resolution}~(b) shows the resolution diagram corresponding to the symbolic execution tree. In the CHC encoding assumed in this example, each predicate represents the
set of states that must hold at a program location, with the
predicate arguments corresponding to the program variables in
scope at that location. Facts represent the program entry points and state preconditions or fixed initial values of variables. Rules encode transitions between program locations, while goals encode the asserted program properties. Comparing the symbolic execution tree with the resolution diagram, it can be seen that the symbolic program states correspond to CHCs that can be derived using \emph{constrained unit resolution,} i.e., the constrained resolution rule restricted to applications in which at least one of the premises is a unit clause. The initial and intermediate symbolic program states translate to \emph{positive} unit clauses, while clauses without any predicate atoms occur when checking the program assertion. An assertion failure corresponds to an \emph{unsatisfiable} clause being derived, i.e., a closed resolution proof showing that the set of CHCs is unsatisfiable. One \emph{difference} between symbolic execution and constrained resolution is that derived clauses do not capture the relationship between initial and intermediate program states: in symbolic execution, the value of variable~$x$ is expressed in terms of the initial value~$x_0$, whereas the derived clauses only describe the set of reachable intermediate states. It is possible to record the relationship to initial states also in the resolution diagram by adding ghost variables storing the initial program state.

Since symbolic execution trees can be large or infinite, different \emph{search strategies} are applied in symbolic execution to systematically traverse the tree and explore possible program behavior. Classical
strategies include breadth-first search (BFS),
depth-first search (DFS), and iterative deepening. Many other
strategies have been developed in the symbolic execution
literature, such as generational search~\citep{DBLP:conf/ndss/GodefroidLM08} and concolic
execution~\citep{DBLP:conf/pldi/GodefroidKS05}, as well as various heuristic-guided approaches such as \klee's coverage-optimized search~\citep{DBLP:conf/osdi/CadarDE08}. Such strategies translate to constrained resolution. It can easily be seen, however, that applying constrained resolution in DFS manner is \emph{incomplete} for non-linear clauses, whereas fair strategies like BFS are refutationally complete (\Cref{sec:forward}). Using fair saturation of a given set of CHCs in combination with subsumption criteria, unit resolution is moreover an effective method for proving that a set of CHCs is \emph{satisfiable} by constructing the least model of the clauses; in this way, constrained resolution can often be used to prove the correctness of programs, even in the presence of unbounded loops. This article surveys some of these calculi and strategies under the unified view of symbolic execution and resolution over CHCs.

Resolution can also proceed from goals.
Applying resolution backward from goals corresponds to
symbolically executing the program in reverse from the error
states, deriving proof obligations that must hold at earlier
program locations, i.e., \emph{backward symbolic execution}~\cite{GulwaniJuvekar2009,DBLP:conf/sas/ChalupaS21}. Backward symbolic execution can be modeled by constrained unit resolution, now applying the resolution rule to derive \emph{negative} unit clauses starting from the goals. As illustrated in \Cref{sec:motivating}, backward strategies are often able to show the satisfiability of CHCs in cases where forward strategies diverge. It is (again) easy to see, however, that constrained unit resolution applied in backward direction is \emph{incomplete} for non-linear clauses, even for fair application strategies. A more appropriate version of resolution is \emph{SLD resolution} known from logic programming~\citep{DBLP:conf/ifip/Kowalski74,DBLP:journals/jacm/AptE82}; in fact, forward resolution corresponds to
\emph{bottom-up} evaluation in logic programming, while backward
resolution corresponds to \emph{top-down} evaluation. We study backward resolution in \Cref{sec:backward}. 

\section{Forward Execution by Unit Hyper-Resolution}\label{sec:forward}

As mentioned in Section~\ref{sec:chc-symex}, Forward Execution (\texttt{FE}) is a kind of bottom-up evaluation and corresponds to symbolically executing a program, starting from the initial state.
In this section, we introduce the calculus of unit hyper-resolution with a loose notion of subsumption, define a fairness criterion for strategies and show that the resulting algorithm is still refutationally complete.
Moreover, we give examples of unsatisfiable sets of CHCs with the proofs of their unsatisfiability and a way to extract models from a set of facts that is saturated w.r.t.\ our subsumption criterion. 
\subsection{The Algorithm of \texttt{FE}}
The core of \texttt{FE} is the inference rule of unit hyper-resolution:
\begin{equation}\label{eq:hyper_resolution}
\resolve{
  \chc{H}{\varphi}{\bigwedge_{i=1}^n P_i(\overline{s_i})}
  \qquad
  \chcfact{P_1(\overline{t_1})}{\iota_1}
  \qquad\dots\qquad
  \chcfact{P_n(\overline{t_n})}{\iota_n}
}{
  \chcfact{H}{\varphi \land \bigwedge_{i=1}^n (\iota_i \land \overline{t_i} = \overline{s_i})}
}
\end{equation}
The inference rule is a combination of \emph{constrained resolution} (\Cref{sec:prelim}) with \emph{hyper-resolution} and guarantees that any consequence obtained by applying it is a fact.
It is well-known that unit hyper-resolution is refutationally complete if it is equipped with a \emph{fair} search strategy, meaning that all possible branches are explored and the strategy does not go down some branch indefinitely~\citep{DBLP:books/daglib/Harrison09}.
Thus, unit hyper-resolution is a suitable basis for proving the unsatisfiability of clause sets. 

However, we might apply the algorithm to some satisfiable set of clauses and thus, will never derive a contradiction.
In such a case, it would be nice if the algorithm terminated as well and gave us the result \texttt{SAT}.
Intuitively, what we need for this is a redundancy-criterion that tells us whether facts that can be derived with hyper-resolution are already entailed by the current set of facts.
We can stop deriving facts, when all facts that can be derived are redundant w.r.t.\ this criterion.
For propositional logic, there is a simple redundancy criterion: a fact is redundant if it is already an element of the clause set.
This works because for propositional facts, which are just propositional atoms, semantic entailment is equivalent to semantic equality, which in turn is equivalent to syntactic equality.
For first order-logic, this is much more subtle: consider the (semantically equivalent) facts $\chcfact{P(x)}{x \geq 0}$ and $\chcfact{P(x)}{x+x \geq 0}$.
Unfortunately, semantic entailment in first order logic is undecidable in general, so we cannot use it directly in our algorithm.
For our purposes, it suffices to assume that we have a procedure \texttt{subsumes} with the following properties:
\begin{enumerate}
    \item \texttt{subsumes} takes two arguments, a set of constrained Horn clauses $\calC$ and some constrained Horn clause $D$, and returns a Boolean value $\{\texttt{true}, \texttt{false}\}$;
    \item \texttt{subsumes} always terminates;
    \item If $D \in \calC$, then $\texttt{subsumes}(\calC, D) = \texttt{true}$;
    \item If $\texttt{subsumes}(\calC, D) = \texttt{true}$, then $\bigwedge_{C \in \calC} C \vDash D$.
\end{enumerate}
Informally, these properties of \texttt{subsumes} ensure that \texttt{subsumes} is a sound and decidable approximation of semantic entailment that, at the very least, recognizes syntactic equality/membership.
Beyond that, we do not make any assumptions about \texttt{subsumes} as different versions have different advantages.
In \Cref{sec:subsumption}, there is a detailed discussion about subsumption and we also give examples for such a procedure.

Now that we have all ingredients for our \texttt{FE}-algorithm, we still need a bit of notation before we can give the algorithm:
\begin{itemize}
    \item For a set of facts $\calF$ and a set of rules $\calR$, $\HR(\calR, \calF)$ denotes the set of hyper-resolvents $F$ obtained by resolving clauses of $\calC$ with facts of $\calF$
    \item For a set of facts $\calF$ and a set of rules $\calR$, $\HR_{\texttt{subsumes}}(\calR, \calF)$ denotes the set $\{F \in \HR(\calR, \calF) \mid \texttt{subsumes}(\calF, F) = \texttt{false}\}$
    \item For a set of goal clauses $\calG$ and a set of facts~$\calF$, $\Conf(\calG, \calF)$ denotes the set of hyper-resolvents $\chcfact{\bot}{\varphi(\overline{x})}$ obtained by resolving goals from $\calG$ with facts from $\calF$ s.t.\ $\vDash \exists\overline{x}. \varphi(\overline{x})$
\end{itemize}
We can now give the abstract transition system of \texttt{FE} for sets of CHCs $\calC = \calF \cup \calR \cup \calG$ given as the disjoint union of facts $\calF$, rules $\calR$ and goals $\calG$:
\begin{figure}[h!]
    \centering
    \begin{alignat*}{3}
        &\text{Unfold} \quad && \calF \mid \calR \mid \calG \Rightarrow \calF \cup \{F\} \mid \calR \mid \calG \quad&& \text{if $F \in \HR_{\texttt{subsumes}}(\calR, \calF)$} \\
        &\text{UNSAT} \quad &&\calF \mid \calR \mid \calG \Rightarrow \texttt{return UNSAT}\quad &&\text{if $\Conf(\calG, \calF) \neq \emptyset$} \\
        &\text{SAT} \quad &&\calF \mid \calR \mid \calG \Rightarrow \texttt{return SAT} \quad&&\text{if $\Conf(\calG, \calF) = \emptyset = \HR_{\texttt{subsumes}}(\calR, \calF)$}
    \end{alignat*}
    \caption{The Abstract Transition System for \texttt{FE}}
    \label{fig:forward_execution}
\end{figure}
\subsection{Refutational Completeness of \texttt{FE}}
The abstract transition system in Figure~\ref{fig:forward_execution} gives us a non-deterministic algorithm for \texttt{FE}, but leaves room for various strategies to decide which facts to add.
In this section, we are not interested in the precise strategy, but we do care about the strategy being \emph{fair}. 
\begin{definition}
    An execution of \texttt{FE} is \emph{fair} if for every state $\calF \mid \calR \mid \calG$ and $F \in \HR_{\texttt{subsumes}}(\calR, \calF)$, eventually some state $\calF \mid \calR \mid \calG'$ is reached, where $\texttt{subsumes}(\calF', F) = \texttt{true}$.
    A strategy for \texttt{FE} that ensures that every execution is fair is called a \emph{fair} strategy.
\end{definition}
Note that we can easily ensure fair executions by stratifying the clauses of $\HR_{\texttt{subsumes}}(\calR, \calF)$.

As mentioned above, it is a well-known fact that unit hyper-resolution equipped with a fair strategy is refutationally complete.
One might be inclined to use this as an argument for reasoning that \texttt{FE} equipped with a fair strategy is also refutationally complete.
While this is intuitive, it is a-priori not clear that we did not sacrifice refutational completeness by adding the redundancy criterion.
The following theorem, however, shows that our intuition is correct:
\begin{theorem}\label{theorem:main_fe}
    If \texttt{FE} is equipped with a fair strategy, then it is refutationally complete, i.e.\ \texttt{FE} terminates on $\calC$ with \texttt{UNSAT} if and only if $\calC$ is unsatisfiable.
\end{theorem}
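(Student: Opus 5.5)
Both directions of the equivalence need an argument. The ``only if'' direction is soundness, and the ``if'' direction is completeness modulo subsumption.

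\emph{Soundness.} I will show by induction on the number of Unfold steps that every fact in the current set $\calF$ is a $\calT$-consequence of $\calC$. The inductive step uses that the unit hyper-resolution rule~\eqref{eq:hyper_resolution} is sound: any model of the premises and any assignment satisfying the constraint $\varphi \land \bigwedge_i(\iota_i \land \overline{t_i}=\overline{s_i})$ makes every body atom true, so it makes $H$ true. Now suppose UNSAT is returned. Then some hyper-resolvent $\chcfact{\bot}{\varphi(\bar x)}$ with $\vDash \exists \bar x.\varphi$ is a consequence of $\calC$. Instantiating $\bar x$ by a witness gives $\calC \vDash \mathit{false}$.

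\emph{Completeness.} Assume $\calC$ is unsatisfiable. I will use the standard least-model characterization for Horn clauses. Fix $M\in\calM_\calT$ and consider the ground atoms $p(\bar a)$ over its domain. The least fixpoint $I_\infty=\bigcup_k I_k$ of the immediate-consequence operator of $\calF_0\cup\calR$ is a model of all facts and rules. Since $\calC$ is unsatisfiable, this least model must violate some goal. That is, there are a goal $\chcgoal{\bigwedge_i P_i(\bar s_i)}{\varphi}$ and an assignment $\beta$ such that $M,\beta\vDash\varphi$ and each $P_i(\bar s_i)\beta \in I_k$ for some finite $k$.

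The key lemma is as follows. For each ground atom $A\in I_k$ there is a finite stage of any fair execution whose fact set $\calF'$ contains a fact $\chcfact{p(\bar t)}{\iota}$ and an assignment $\gamma$ with $p(\bar t)\gamma=A$ and $M,\gamma\vDash\iota$. I will prove it by induction on $k$. For $k=0$, the atom comes from an original fact, which is already in $\calF$. For $k+1$, $A$ is produced by a rule whose body atoms lie in $I_k$. By induction each body atom is covered by a fact present at some finite stage, and since $\calF$ only grows, all of them are present at one common stage. Combining the rule with these facts gives a hyper-resolvent $F$ covering $A$. If $\texttt{subsumes}(\calF,F)$ holds, then $\calF\vDash F$, so $A$ is entailed by $\calF$. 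Otherwise fairness guarantees that eventually $\texttt{subsumes}(\calF',F)$ holds, with the same conclusion.

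This is the main obstacle. Subsumption yields only \emph{entailment} of $A$, not a syntactic fact that covers it, and hyper-resolution needs actual facts. To fix this I will weaken the invariant to the following: $A$ is true in every model of the current $\calF$ over $M$. The least model of the current facts, which are facts only, is the union of their ground instances. So $A$ being entailed means $A$ is a ground instance of some fact in $\calF'$, because the minimal model of a set of facts over $M$ makes exactly those instances true. This recovers the covering fact.

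With the lemma, all atoms $P_i(\bar s_i)\beta$ are covered at a common finite stage. Resolving the violated goal against those facts gives an element of $\Conf(\calG,\calF')$, whose constraint is satisfied by the combined assignment. So UNSAT is enabled. SAT cannot fire first, because $\Conf$ is nonempty from that point on, and the conflict persists since $\calF$ grows monotonically. Therefore the fair execution returns UNSAT.
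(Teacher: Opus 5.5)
Your argument is correct, up to a caveat that the paper shares (see below), but you organize it differently from the paper.

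\textbf{How the two routes differ.} The paper uses refutational completeness of constrained unit hyper-resolution as a black box. It claims, by induction over hyper-resolution derivations, that every derivable fact $F$ eventually satisfies $\texttt{subsumes}(\calF',F)=\texttt{true}$. It then picks a witness for the conflicting goal and asserts that entailment yields facts in $\calF^*$ whose constraints hold at that witness.

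You instead fix $M\in\calM_\calT$, use the least fixpoint of $\calF\cup\calR$ over $M$ to locate a violated goal, and prove a covering lemma by induction on fixpoint stages. This buys two things.
\begin{enumerate}
\item The argument is self-contained: you effectively reprove completeness of constrained hyper-resolution rather than citing it.
\item Your invariant is the right one. The paper's inductive claim does not follow from fairness when \texttt{subsumes} is merely sound. Suppose a derivable fact $F_1$ was subsumed but never added. Then a fact $F_2$ derived from $F_1$ is not in $\HR(\calR,\calF')$, so fairness says nothing about it, and \texttt{subsumes} need not recognize $F_2$ even though it is entailed. Tracking pointwise coverage closes exactly this hole, since you convert entailment back into an actual covering fact via the least model of a set of facts over $M$. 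That conversion also justifies the step the paper only asserts.
\end{enumerate}
The paper's version is shorter and more modular, at the price of these gaps.

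\textbf{The caveat.} It concerns your last step. Resolving the violated goal with the covering facts gives a conflict whose constraint is satisfiable in $M$. But $\Conf$ demands $\vDash\exists\bar x.\varphi$ in every model of $\calT$, and the covering facts you pick depend on $M$. This is harmless when truth in one model of $\calT$ transfers to all of them, e.g.\ for a complete theory such as LIA. The paper's proof silently makes the same assumption when it picks witnesses $\overline{a_i}$ with $\vDash\ldots$.

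Without that assumption the theorem fails. Take a constant $c$ that is $0$ in one model of $\calT$ and $1$ in another, the facts $\chcfact{P(x)}{x=0}$ and $\chcfact{Q(x)}{x=1}$, and the goals $\chcgoal{P(x)}{x=c}$ and $\chcgoal{Q(x)}{x=c}$. This set is unsatisfiable, yet no single conflict has a valid constraint, so \texttt{FE} returns \texttt{SAT}. You should state the assumption explicitly.

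Like the paper, you also take for granted that the strategy fires \texttt{UNSAT} once it is enabled. Fairness as defined only constrains Unfold.
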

\begin{proof}
    It is clear that if \texttt{FE} terminates with \texttt{UNSAT}, then $\calC$ is unsatisfiable.
    We only have to show that if $\calC$ is unsatisfiable, then \texttt{FE} terminates with \texttt{UNSAT} if it is run with a fair strategy.

    Assume that we are given a set of clauses $\calC = \calF \cup \calR \cup \calG$.
    It follows inductively that for every fact $F$ that can be derived from $\calC$ with hyper-resolution, \texttt{FE} reaches a state $\calF' \mid \calR \mid \calG$ s.t.\ $\texttt{subsumes}(\calF', F) = \texttt{true}$.
    Assume that $\calC$ is unsatisfiable.
    Since hyper-resolution is refutationally complete, there is a goal $\chc{\bot}{\bigwedge_{i=1}^n P_i(\overline{x_i})}{\phi(\overline{x_1}, \dots, \overline{x_n})}$ in $\calG$ and facts $F_1, \dots, F_n$ with $F_i : \chcfact{P_I(\overline{x_i})}{\iota_i(\overline{x_i})}$ that can be derived with hyper-resolution s.t.\ $\vDash \exists \overline{x_1}, \dots, \overline{x_n}\bigwedge_{i=1}^n \iota_i(\overline{x_i}) \land \phi(\overline{x_1}, \dots, \overline{x_n})$.
    There is a state $\calF^* \mid \calR \mid \calG$ s.t.\ $\texttt{subsumes}(\calF^*, F_i) = \texttt{true}$ for all $F_i$.
    Pick any $\overline{a_1}, \dots, \overline{a_n}$ s.t.\ $\vDash \bigwedge_{i=1}^n \iota_i(\overline{a_i}) \land \phi(\overline{a_1}, \dots, \overline{a_n})$.
    There are facts $F_i^* \in \calF^*$ s.t.\ $F_i^* : \chcfact{P_i(\overline{x_i})}{\iota_i^*(\overline{x_i})}$ and $\vDash \iota_i^*(\overline{a_i})$ for all $i$.
    In particular, $\vDash \exists \overline{x_1}, \dots, \overline{x_n}\bigwedge_{i=1}^n \iota_i^*(\overline{x_i}) \land \phi(\overline{x_1}, \dots, \overline{x_n})$.
    Thus, $\Conf(\calG, \calF^*)$ is not empty and \texttt{FE} terminates with \texttt{UNSAT}.
\end{proof}
\subsection{Constructing Models from \texttt{FE}}
\Cref{theorem:main_fe} implies that \texttt{FE} is sound, which means that a set of clauses $\calC$ is unsatisfiable if \texttt{FE} terminates with \texttt{UNSAT} and satisfiable if \texttt{FE} terminates with \texttt{SAT}.
Moreover, the last state of the algorithm before terminating with \texttt{SAT} allows us to extract a model of $\calC$ quite easily.
\begin{definition}
    Let $\calF$ be a finite set of facts and $P$ a predicate symbol. 
    We define $\calF_P = \{\iota(\overline{x}) \mid \chcfact{P(\overline{x})}{\iota(\overline{x})} \in \calF\}$. W.l.o.g., we assume that all $\iota \in\calF_P$ have the same free variables.
    We define $P(\calF) = \bigvee_{\iota \in \calF_P} \iota$.
\end{definition}
The following theorem is standard and thus, we state it without proof:
\begin{theorem}
    Let $\calC = \calF \cup \calR \cup \calG$ be a set of CHCs. If \texttt{FE} terminates with \texttt{SAT} on $\calC$ and the last state of \texttt{FE} before terminating is $\calF^* \mid \calR \mid \calG$, then the least model of $\calC$ is given by interpreting each $P$ with $P(\calF^*)$.
\end{theorem}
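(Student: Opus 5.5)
The plan is to fix an arbitrary $\calT$-structure $M_\calT \in \calM_\calT$ and let $M^*$ be its extension that interprets each predicate $P$ by the set of tuples $\bar a$ with $M_\calT \vDash P(\calF^*)[\bar a]$, i.e.\ $\bar a$ satisfies $\iota$ for some fact $\chcfact{P(\overline{x})}{\iota(\overline{x})} \in \calF^*$. The proof then has two parts: $M^*$ is a model of $\calC$, and $M^*$ is contained, predicate by predicate, in every model of $\calC$ that extends $M_\calT$. Throughout I use that the algorithm never deletes facts, so $\calF \subseteq \calF^*$. I also use that every fact in $\calF^*$ is a logical consequence of $\calC$, since unit hyper-resolution~\eqref{eq:hyper_resolution} is sound.

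\emph{$M^*$ is a model.} There are three kinds of clauses to check.
\begin{itemize}
\item \emph{Facts.} Every original fact lies in $\calF^*$, so its constraint is one of the disjuncts defining the head predicate. Hence $M^*$ satisfies it. More generally, $M^* \vDash \calF^*$.
\item \emph{Rules.} Take a rule $\chc{H}{\varphi}{\bigwedge_{i=1}^n P_i(\overline{s_i})}$ and an assignment $\beta$ under which $\varphi$ and every $P_i(\overline{s_i})$ hold in $M^*$. By the definition of $M^*$, for each $i$ there is a fact $\chcfact{P_i(\overline{t_i})}{\iota_i} \in \calF^*$ and an assignment of its (renamed, disjoint) variables that satisfies $\iota_i$ and makes $\overline{t_i}$ equal to $\beta(\overline{s_i})$. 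Resolving the rule with these $n$ facts gives a hyper-resolvent $F \in \HR(\calR, \calF^*)$. Combining the assignments gives one under which the constraint of $F$ holds and whose head agrees with $H$ under $\beta$. The termination condition gives $\HR_{\texttt{subsumes}}(\calR, \calF^*) = \emptyset$, so $\texttt{subsumes}(\calF^*, F) = \texttt{true}$, and by property~4 of \texttt{subsumes}, $\calF^* \vDash F$. Since $M^* \vDash \calF^*$ and $M^*$ extends a theory model, $M^* \vDash F$, so the head $H$ holds under $\beta$.
\item \emph{Goals.} If the body of a goal were satisfied in $M^*$ under some $\beta$, the same construction would produce a hyper-resolvent $\chcfact{\bot}{\psi}$ with $\vDash \exists \overline{x}.\,\psi$. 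That resolvent is an element of $\Conf(\calG, \calF^*)$, contradicting $\Conf(\calG, \calF^*) = \emptyset$.
\end{itemize}

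\emph{$M^*$ is least.} Let $M'$ be any model of $\calC$ extending $M_\calT$. Each fact $\chcfact{P(\overline{x})}{\iota}$ in $\calF^*$ is a consequence of $\calC$, so $M' \vDash \chcfact{P(\overline{x})}{\iota}$. Therefore every tuple satisfying $\iota$ lies in $P^{M'}$, and taking the union over all facts gives $P^{M^*} \subseteq P^{M'}$.

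The step I expect to be the main obstacle is the rule case. It needs care on two points. First, the witnessing facts must be combined into a single genuine hyper-resolvent with fresh variables, and the head must be matched correctly to $H\beta$. Second, the argument relies on property~4 of \texttt{subsumes} holding relative to $\calT$, so that entailment can be transferred to the specific structure $M^*$; this is where the paper's definition of $\vDash_\calT$ over extensions of theory models is used. A smaller technical point is the convention that all $\iota \in \calF^*_P$ share the same free variables, which is needed to read $P(\calF^*)$ as a formula over the argument positions of $P$. I would handle it by normalising every fact to the form $P(\overline{x}) \leftarrow \exists \bar y.\,(\overline{x} = \bar t \land \iota)$ before taking the disjunction.
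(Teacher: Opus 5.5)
The paper gives no proof of this theorem; it calls the result standard. Your argument is the standard least-fixpoint argument, and it is correct. $M^*$ satisfies $\calF^*$ by construction. It is closed under the rules because every element of $\HR(\calR,\calF^*)$ is subsumed, and property~4 of \texttt{subsumes} transfers $\calF^* \vDash F$ to $M^*$. It lies below every model extending $M_\calT$ by soundness of hyper-resolution. Your normalisation of facts to $P(\overline{x}) \cmark \exists \bar y.\,(\overline{x} = \bar t \land \iota)$ is a genuine repair, not a cosmetic one. The paper's $\calF_P$ only collects facts whose head arguments are variables, while derived facts typically have heads such as $q(x_0+2, 1, x_0)$.

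The one step that needs an explicit hypothesis is the goal case. Your construction yields a hyper-resolvent $\chcfact{\bot}{\psi}$ with $M_\calT \vDash \exists \overline{x}.\,\psi$. Membership in $\Conf(\calG,\calF^*)$, however, requires $\vDash_\calT \exists\overline{x}.\,\psi$, i.e.\ truth in every structure of $\calM_\calT$. These coincide only when all structures in $\calM_\calT$ satisfy the same $\Sigma_\calT$-sentences, e.g.\ when $\calT$ is given by a single intended model, as for LIA, LRA, bit-vectors or arrays.

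Without this assumption the statement fails for some choices of $M_\calT$. Suppose a theory constant $c$ denotes $0$ in one structure and $1$ in another, and take the clauses $\chcfact{P(x)}{x=0}$ and $\chcgoal{P(x)}{x=c}$. Then \texttt{FE} returns \texttt{SAT}, yet over the structure with $c=0$ your $M^*$ violates the goal, and no model over that structure exists at all. Theorem~\ref{theorem:main_fe} silently relies on the same assumption. State it, or fix $M_\calT$ to be the unique intended model, and your proof goes through.
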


\section{Backward Execution by Hyper-SLD-Resolution}\label{sec:backward}

In \Cref{sec:forward}, we considered \texttt{FE}, which derives new facts, until the set of facts either stabilizes or an error is found. 
This corresponds to symbolically executing a program from the start.
Symbolic execution can also be performed in the other direction: we start with an initial set of error states and iteratively add the states that lead to an error in one step.
In the setting of Horn clauses, this corresponds to starting with an initial set of goals and adding new goals by resolving with rules.
This approach has been successfully employed in the unconstrained case as the calculus underlying Prolog.
However, we can easily adapt this to our setting of CHCs and define the constrained version of \emph{SLD-resolution:}
\subsection{The Algorithm of \texttt{BE}}
Similar to \texttt{FE} there is a single inference rule underlying \texttt{BE}.
Generalizing SLD-resolution from the unconstrained setting (e.g.\ Prolog) to CHCs yields the following, constrained, version of SLD:
\begin{equation}\label{eq:sld}
\resolve{
  \chc{\bot}{P(\overline{s})\land B(\overline{t})}{\varphi}
  \qquad
  \chc{P(\overline{q})}{B'(\overline{r})}{\psi}
}{
  \chc{\bot}{B(\overline{t})\land B'(\overline{r})}{\varphi \land \psi \land \overline{q} = \overline{s}},
}
\end{equation}
Similarly to \Cref{sec:forward}, we will add a redundancy criterion, so that we can deal with satisfiable clause-sets.
However, in order to prove that the resulting algorithm is still refutationally complete, we need a slightly more general version of SLD-resolution, which we call hyper-SLD-resolution:
\begin{equation}\label{eq:hsld}
\resolve{
  \chc{\bot}{\bigwedge_{i=1}^n P_i(\overline{s_i}) \land B}{\varphi}
  \qquad
  \chc{P_1(\overline{t_1})}{A_1}{\psi_1}
  \qquad \dots \qquad
  \chc{P_n(\overline{t_n})}{A_n}{\psi_n}
}{
  \chc{\bot}{B\land \bigwedge_{i=1}^n A_i}{\varphi \land \bigwedge_{i=1}^n (\psi_i \land \overline{s_i} = \overline{t_i})},
}
\end{equation}
Note that the SLD-resolution rule is a special instance of hyper-SLD-resolution with $n=1$.
On the other hand, every hyper-SLD-derivation can easily be transformed into an SLD-derivation.
Note that, in contrast to hyper-resolution, hyper-SLD-resolution does not have to yield a unit clause, nor do we have to resolve all atoms of the body.
Hyper-SLD-resolution simply allows us to perform several SLD-resolution simultaneously, but in a very controlled way.

Similarly to hyper-resolution, SLD resolution is known to be refutationally complete if it is equipped with a \emph{fair} strategy~\citep{DBLP:books/daglib/Harrison09}.
In particular, hyper-SLD-resolution is refutationally complete and therefore, it is a suitable basis for proving the unsatisfiability of clause sets.
In order to add a redundancy criterion, we assume to have a procedure \texttt{subsumes} with the same properties as in \Cref{sec:forward}.
Before giving the algorithm, we need a little bit of notation:
\begin{itemize}
    \item For a set of goal clauses $\calG$ and a set of rules $\calR$, $HSLD(\calR, \calG)$ denotes the set of resolvents $G$ that are obtained by applying hyper-SLD resolution to goals of $\calG$ and rules of $\calR$
    \item For a set of goal clauses $\calG$ and a set of rules $\calR$, $HSLD_{\texttt{subsumes}}(\calR, \calG)$ denotes the set $\{G \in HSLD(\calR, \calG) \mid \texttt{subsumes}(\calG,G) = \texttt{false}\}$
    \item For a set of goal clauses $\calG$ and a set of facts $\calC$, $\Conf(\calG, \calF)$ is defined as in \cref{sec:forward}
\end{itemize}
We can now give the abstract transition system for \texttt{BE} for a set of CHCs $\calC = \calF \cup \calR \cup \calG$ given as the disjoint union of facts $\calF$, rules $\calR$ and goals $\calG$:
\begin{figure}[h!]
    \centering
    \begin{alignat*}{3}
        &\text{Unfold} \quad && \calF \mid \calR \mid \calG \Rightarrow \calF \mid \calR \mid \calG\cup \{G\} \quad&& \text{if $G \in HSLD_{\texttt{subsumes}}(\calR, \calG)$} \\
        &\text{UNSAT} \quad &&\calF \mid \calR \mid \calG \Rightarrow \texttt{return UNSAT}\quad &&\text{if $\Conf(\calG, \calF) \neq \emptyset$} \\
        &\text{SAT} \quad &&\calF \mid \calR \mid \calG \Rightarrow \texttt{return SAT} \quad&&\text{if $\Conf(\calG, \calF) = \emptyset = HSLD_{\texttt{subsumes}}(\calR, \calG)$}
    \end{alignat*}
    \caption{The Abstract Transition System for \texttt{BE}}
    \label{fig:backward_execution}
\end{figure}

\begin{figure}[t]
\begin{center}
\begin{tikzpicture}[every node/.append style={font=\scriptsize}]
  \node[nd] (n0) at (2.8,0)
    {$\false \cmark \mathit{Even}(a) \csep \mathit{Odd}(b) \csep b {+} a \geq 9$};
  \node[cl] at ([xshift=-1.2em]n0.west) {\eqref{cl:eog}};

  \node[nd] (n1) at (0,-1.4)
    {$\false \cmark \mathit{Odd}(a) \csep a \geq 9$};

  \draw[arr] (n0) -- (n1) node[cl,midway,left] {\eqref{cl:eb}};
  \draw[arr] (n0) -- ++(2.8,-1.4) coordinate (rtip)
    node[cl,midway,right] {\eqref{cl:er}};
  \node[font=\scriptsize, text=gray] at ([yshift=-0.7em]rtip) {$\vdots$};

  \node[ok] (n3) at (-1.7,-2.8)
    {$\false \cmark a {=} 1 \csep a \geq 9$};
  \node[nd] (n4) at (1.7,-2.8)
    {$\false \cmark \mathit{Odd}(a) \csep a \geq 7$};

  \draw[arr] (n1) -- (n3) node[cl,midway,left] {\eqref{cl:ob}};
  \draw[arr] (n1) -- (n4) node[cl,midway,right] {\eqref{cl:or}};

  \node[ok] (n7) at (0,-4.2)
    {$\false \cmark a {=} 1 \csep a \geq 7$};
  \node[nd] (n8) at (3.4,-4.2)
    {$\false \cmark \mathit{Odd}(a) \csep a \geq 5$};

  \draw[arr] (n4) -- (n7) node[cl,midway,left] {\eqref{cl:ob}};
  \draw[arr] (n4) -- (n8) node[cl,midway,right] {\eqref{cl:or}};

  \node[ok] (n13) at (1.7,-5.6)
    {$\false \cmark a {=} 1 \csep a \geq 5$};
  \node[nd] (n14) at (5.1,-5.6)
    {$\false \cmark \mathit{Odd}(a) \csep a \geq 3$};

  \draw[arr] (n8) -- (n13) node[cl,midway,left] {\eqref{cl:ob}};
  \draw[arr] (n8) -- (n14) node[cl,midway,right] {\eqref{cl:or}};

  \node[ok] (n19) at (3.4,-7.0)
    {$\false \cmark a {=} 1 \csep a \geq 3$};
  \node[nd] (n20) at (6.8,-7.0)
    {$\false \cmark \mathit{Odd}(a) \csep a \geq 1$};

  \draw[arr] (n14) -- (n19) node[cl,midway,left] {\eqref{cl:ob}};
  \draw[arr] (n14) -- (n20) node[cl,midway,right] {\eqref{cl:or}};

  \node[hi] (n25) at (6.8,-8.4)
    {$\false \cmark a {=} 1 \csep a \geq 1$};

  \draw[arr] (n20) -- (n25) node[cl,midway,right] {\eqref{cl:ob}};
\end{tikzpicture}
\end{center}
\caption{Backward resolution for \Cref{ex:even-odd}.}
\label{fig:backward-nonlinear}
\end{figure}

\begin{example}\label{ex:even-odd}
Consider the following CHCs that define the non-negative even and odd numbers with the non-linear goal~\eqref{cl:eog}:
\begin{align}
  \achcfact{\mathit{Even}(a)}{a = 0}
    \label{cl:eb} \\[2pt]
  \achcfact{\mathit{Odd}(a)}{a = 1}
    \label{cl:ob} \\[2pt]
  \achc{\mathit{Even}(a)}{\mathit{Even}(b)}{a = b + 2}
    \label{cl:er} \\[2pt]
  \achc{\mathit{Odd}(a)}{\mathit{Odd}(b)}{a = b + 2}
    \label{cl:or} \\[2pt]
  \achcgoal{\mathit{Even}(a) \csep \mathit{Odd}(b)}{b + a \geq 9}
    \label{cl:eog}
\end{align}
The clause set is unsatisfiable, since the sum in the goal will eventually exceed $9$.
\Cref{fig:backward-nonlinear} shows the backward resolution tree. Resolving $\mathit{Even}(a)$ in~\eqref{cl:eog} with~\eqref{cl:eb} eliminates $\mathit{Even}$, yielding the linear goal $\false \cmark \mathit{Odd}(a) \csep a \geq 9$. The branch then continues resolving with \eqref{cl:or}, eventually deriving a contradiction. The right branch from~\eqref{cl:er} is also explored but not shown.
\end{example}

\subsection{Refutational Completeness of \texttt{BE}}
Similarly to \Cref{sec:forward} the abstract transition system from \Cref{fig:backward_execution} leaves some freedom w.r.t. the strategy for which goals to add.
Again, the concrete strategy does not matter, only \emph{fairness}:
\begin{definition}
    An execution of \texttt{BE} is \emph{fair} if for every state $\calF \mid \calR \mid \calG$ and $G \in HSLD_{\texttt{subsumes}}(\calR, \calG)$, eventually some state $\calF \mid \calR \mid \calG'$ is reached, where $\texttt{subsumes}(\calG', G) = \texttt{true}$.
    A strategy for \texttt{BE} that ensures that every execution is fair is called a \emph{fair} strategy.
\end{definition}
Note that we can easily ensure fair executions by stratifying the clauses of $HSLD_{\texttt{subsumes}}(\calR, \calG)$.

Similarly to \Cref{sec:forward}, one might find it intuitive that \texttt{BE} is refutationally complete because SLD-resolution is.
Once again, this intuition will turn out to be correct, but goal clauses, in particular non-linear ones, are more peculiar than facts, so this is not as trivial as it was for \texttt{FE}.

First, we need to think about sets of goals and consequences of SLD-resolution.
Interestingly enough, SLD-resolution is not monotone w.r.t. logical entailment, as the following example shows:
\begin{example}
    Assume we have one rule $R: \chcfact{P(x)}{Q(x)}$ and the following two goals $G_1 : \chc{\bot}{P(x)\land P(y)}{x\geq y}$ and $G_2 : \chcfact{\bot}{P(x)}$. 
    It holds that $\vDash G_1\leftrightarrow G_2$ and in particular, $G_1 \vDash G_2$.
    Let $S_i$ be the set of all possible SLD-resolvents of $G_i$ and $R$.
    We have that $S_1 = \{\chc{\bot}{P(x) \land Q(y)}{x \geq y}, \chc{\bot}{Q(x) \land P(y)}{x \geq y}\}$ and $S_2 = \{\chcfact{\bot}{Q(x)}\}$.
    It follows that $G_1, S_1 \nvDash \bigwedge_{G \in S_2} G$. 
\end{example}
The example above might give some idea, why we failed working with SLD-resolution and introduced hyper-SLD-resolution.
The following lemma shows that hyper-SLD-resolution is monotone w.r.t. logical entailment:
\begin{lemma}\label{lemma:hsld_monotonicity}
    Let $\calG_1$ and $\calG_2$ be sets of goal clauses and $\calR$ be a set of rules. If $\calG_1 \vDash G$ for any $G \in \calG_2$, then $\calG_1, HSLD(\calR, \calG_1) \vDash G$ for any $G \in HSLD(\calR, \calG)$.
    In particular, $\calG_1, HSLD_{\texttt{subsumes}}(\calR, \calG_1) \vDash G$ for any $G \in HSLD(\calR, \calG)$.
\end{lemma}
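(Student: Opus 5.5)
We read the statement as follows: if $\calG_1 \vDash G$ for every $G \in \calG_2$, then $\calG_1 \cup HSLD(\calR,\calG_1) \vDash G'$ for every $G' \in HSLD(\calR,\calG_2)$. The plan is a purely semantic argument via a one-step ``immediate consequence'' extension of a model. Fix a structure $M$ (a model of $\calT$, extended to the predicate symbols) with $M \vDash \calG_1 \cup HSLD(\calR,\calG_1)$. Let $T_\calR(M)$ denote the set of ground atoms $P(\overline{a})$ for which some rule $\chc{P(\overline{q})}{A}{\psi} \in \calR$ and some assignment make $A$ true in $M$, make $\psi$ true, and give $\overline{q}$ the value $\overline{a}$. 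Define $M'$ as $M$ with each predicate $P$ interpreted as $P^M \cup \{\overline{a} \mid P(\overline{a}) \in T_\calR(M)\}$. The theory part is untouched, so $M'$ is still an extension of a $\calT$-model. Since goals have only negative atoms and $M \subseteq M'$, any body satisfied in $M$ is also satisfied in $M'$.

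\emph{Step 1: $M' \vDash \calG_1$.} Suppose some $\chc{\bot}{C_1\land\dots\land C_m}{\varphi} \in \calG_1$ is falsified in $M'$ by an assignment $\beta$. Let $I$ be the set of positions $j$ with $C_j\beta \notin M$; each such atom lies in $T_\calR(M)$, witnessed by a rule and an assignment of that rule's variables.
\begin{itemize}
\item If $I=\emptyset$, the goal is falsified in $M$ itself, which is a contradiction.
\item Otherwise, apply hyper-SLD resolution~\eqref{eq:hsld} to exactly the occurrences $C_j$, $j\in I$. Each occurrence gets its own fresh variable-disjoint copy of its witnessing rule. The combined assignment (from $\beta$ and the witnesses) satisfies all constraints and equalities $\overline{s_j}=\overline{t_j}$. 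It also satisfies the remaining body $B$ (true in $M$) and the rule bodies $A_j$ (true in $M$). Hence this resolvent in $HSLD(\calR,\calG_1)$ is falsified in $M$, again a contradiction.
\end{itemize}

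\emph{Step 2.} By hypothesis, $M' \vDash \calG_1$ gives $M' \vDash G$ for every $G\in\calG_2$. Now let $G'\in HSLD(\calR,\calG_2)$ arise from $G:\chc{\bot}{\bigwedge_i P_i(\overline{s_i})\land B}{\varphi}$ and rules $\chc{P_i(\overline{t_i})}{A_i}{\psi_i}$, and suppose $\beta$ falsifies $G'$ in $M$. Then $B$ and all $A_i$ hold in $M$, and $\varphi$, $\psi_i$ and $\overline{s_i}=\overline{t_i}$ hold under $\beta$. Thus each $P_i(\overline{s_i})\beta = P_i(\overline{t_i})\beta \in T_\calR(M)\subseteq M'$, and $B$ holds in $M'$ by monotonicity. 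So $\beta$ falsifies $G$ in $M'$, a contradiction. Hence $M\vDash G'$, which proves the main claim.

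\emph{The ``in particular'' part.} Every $G \in HSLD(\calR,\calG_1)\setminus HSLD_{\texttt{subsumes}}(\calR,\calG_1)$ satisfies $\texttt{subsumes}(\calG_1,G)=\texttt{true}$. By property~4 of \texttt{subsumes}, this gives $\calG_1\vDash G$, so $\calG_1 \cup HSLD_{\texttt{subsumes}}(\calR,\calG_1)$ entails all of $HSLD(\calR,\calG_1)$, and the first part applies.

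\emph{Main obstacle.} The delicate point is Step 1. Hyper-SLD must be allowed to resolve an \emph{arbitrary subset of body-atom occurrences} simultaneously. Repeated occurrences of the same predicate may need to be treated differently: some resolved, some left, possibly with different rules, each on a separately renamed rule copy. This is exactly the feature that plain SLD lacks, as the preceding example shows. So the write-up must check carefully that the occurrence-wise resolvent is literally an instance of~\eqref{eq:hsld}, with variable-disjointness justifying the combined assignment.
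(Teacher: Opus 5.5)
Your proof is correct, but it is organized differently from the paper's.

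The paper argues pointwise. It fixes $M$ and values $\bar a,\bar b_i,\bar c_i$ at which $M$ violates a resolvent in $HSLD(\calR,\calG_2)$. It then uses $\calG_1 \vDash G'$, for the parent $G'\in\calG_2$, to pick a single goal $G''\in\calG_1$ whose body, at a suitable witness, consists only of atoms among $B(\bar a)$ and the $P_i(\bar c_i)$. Finally, one hyper-SLD step cuts exactly the occurrences instantiated to some $P_i(\bar c_i)$ with the corresponding $R_i$. This yields an explicit goal of $HSLD(\calR,\calG_1)$ that $M$ violates. Choosing $G''$ tacitly relies on a minimal-interpretation (Herbrand-style) argument: the paper's remark that the witness ``uses only elements of'' $\bar a,\bar c_i$. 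That is essentially the goal case of the completeness result for disjunctive subsumption proved later.

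You instead pass to the one-step extension $M\cup T_{\calR}(M)$ and use the hypothesis only as a black-box entailment evaluated in that structure. The cutting moves into your Step 1, and Step 2 is a plain pull-back with no cutting. This avoids the implicit Herbrand step entirely. It also gives more: Steps 1 and 2, together with your monotonicity remark, show that the models of $\calG\cup HSLD(\calR,\calG)$ are exactly those $M$ with $M\cup T_{\calR}(M)\vDash\calG$, from which monotonicity with respect to entailment is immediate. The paper's route, in exchange, tells you concretely which goal of $\calG_1\cup HSLD(\calR,\calG_1)$ covers each individual violation.

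Both proofs hinge on the feature you single out: cutting an arbitrary subset of body occurrences simultaneously, with renamed rule copies. The paper uses it the same way when it cuts only the occurrences instantiated to $\bar c_i$. Your handling of the ``in particular'' clause via property 4 of \texttt{subsumes} fills in a step the paper leaves implicit.
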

\begin{proof}
    Take any goal $G : \chc{\bot}{B(\overline{x}) \land \bigwedge_{i=1}^n A_i(\overline{y_i})}{\psi(\overline{x}, \overline{z_1},\dots, \overline{z_n}) \land \bigwedge_{i=1}^n \tau_i(\overline{y_i}, \overline{z_i})}$ in $HSLD(\calR, \calG_2)$.
    There is a goal $G' : \chc{\bot}{B(\overline{x}) \land \bigwedge_{i=1}^n P_i(\overline{z_i})}{\psi(\overline{x}, \overline{z_1}, \dots, \overline{z_n})} \in \calG_2$ and rules $R_1, \dots, R_n \in \calR$ with $R_i : \chc{P_i(\overline{z_i})}{A_i(\overline{y_i})}{\tau_i(\overline{y_i}, \overline{z_i})}$.
    Pick any model $M$ of the background theory $T$ that interprets the uninterpreted predicate symbols as well.
    We must show that for any $\overline{a}$ and $\overline{b_1},\overline{c_1},\dots, \overline{b_n}, \overline{c_n}$, there is a goal $\chc{\bot}{B^*(\overline{u})}{\phi(\overline{u})}$ in $\calG_1 \cup HSLD(\calR, \calG_1)$ s.t.\ $M \vDash\left(B(\overline{a}) \land \bigwedge_{i=1}^n A_i(\overline{b_i}) \land \psi(\overline{a}, \overline{c_1},\dots, \overline{c_n}) \land \bigwedge_{i=1}^n \tau_i(\overline{b_i}, \overline{c_i}) \right)\rightarrow \left(\exists \overline{u}. B^*(\overline{u}) \land \phi(\overline{u})\right)$.

    If $M \vDash \neg (B(\overline{a}) \land \bigwedge_{i=1}^n A_i(\overline{b_i}) \land \psi(\overline{a}, \overline{c_1},\dots, \overline{c_n}) \land \bigwedge_{i=1}^n \tau_i(\overline{b_i}, \overline{c_i}))$, then we are done.

    Assume that $M \vDash B(\overline{a}) \land \bigwedge_{i=1}^n A_i(\overline{b_i}) \land \psi(\overline{a}, \overline{c_1},\dots, \overline{c_n}) \land \bigwedge_{i=1}^n \tau_i(\overline{b_i}, \overline{c_i})$.
    Since $\calG_1 \vDash G'$, there is a goal $G'' : \chc{\bot}{B''(\overline{u})}{\phi''(\overline{u})}\in \calG_1$ with $M \vDash B(\overline{a}) \land \bigwedge_{i=1}^n P_i(\overline{c_i}) \land \psi(\overline{a},\overline{c_1}, \dots, \overline{c_n}) \rightarrow \exists \overline{u}.B''(\overline{u}) \land \phi''(\overline{u})$.
    Since $B''$ consists of uninterpreted predicate symbols only, it follows that the witness of $\overline{u}$ uses only elements of $\overline{a}, \overline{c_1}, \dots, \overline{c_n}$.
    Thus, w.l.o.g., $G''$ has the form $\chc{\bot}{B''(\overline{v}, \overline{u_1}, \dots, \overline{u_n})}{\phi''(\overline{v}, \overline{u_1}, \dots, \overline{u_n})}$ and $M \vDash B(\overline{a}) \land \bigwedge_{i=1}^n P_i(\overline{c_i}) \land \psi(\overline{a},\overline{c_1}, \dots, \overline{c_n}) \rightarrow B''(\overline{a}, \overline{c_1}, \dots, \overline{c_n}) \land \phi''(\overline{a}, \overline{c_1}, \dots, \overline{c_n})$.
    In particular, $M \vDash \psi(\overline{a}, \overline{c_1}, \dots, \overline{c_n}) \rightarrow \phi(\overline{a}, \overline{c_1}, \dots, \overline{c_n})$.
    We are almost done. 
    The only problem is that $B''$ might contain the predicate symbol $P_i$ s.t.\ $B''(\overline{a}, \overline{c_1},\dots, \overline{c_n})$ contains $P_i(\overline{c_i})$.
    However, we can use one application of the hyper-SLD-rule to cut all such atoms with the respective rule $R_i$.
    Note that we do not necessarily cut all atoms containing $P_i(\overline{x})$, but only those where $\overline{x}$ is substituted with $\overline{c_i}$.
    After applying this resolution, we obtain a goal $\chc{\bot}{B^*(\overline{u})}{\phi(\overline{u}, \overline{v_1},\dots, \overline{v_n}) \land \bigwedge_{i=1}^n \tau_i(\overline{w_i}, \overline{v_i})}$ with $M \vDash \left(B(\overline{a}) \land \bigwedge_{i=1}^n P_i(\overline{c_i}) \land \psi(\overline{a},\overline{c_1}, \dots, \overline{c_n})\right) \rightarrow \left(B^*(\overline{a}, \overline{c_1},\dots,\overline{c_n}) \land \phi(\overline{a}, \overline{c_1},\dots, \overline{c_n}) \land \bigwedge_{i=1}^n \tau_i(\overline{b_i}, \overline{c_i})\right)$.
\end{proof}
\begin{theorem}
    If \texttt{BE} is run with a fair strategy, then it is refutationally complete, i.e.\ \texttt{BE} terminates with \texttt{UNSAT} on input $\calC$ if and only if $\calC$ is unsatisfiable.
    \end{theorem}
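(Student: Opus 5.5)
The plan mirrors the proof of \Cref{theorem:main_fe}, with \Cref{lemma:hsld_monotonicity} playing the role that trivial monotonicity of facts played there. Soundness is immediate. Each hyper-SLD step is a chain of constrained resolution steps, so every goal ever added to $\calG$ is a consequence of $\calC$. \texttt{UNSAT} is only returned if some goal in the current $\calG$ resolves against facts of $\calF$ to $\chcfact{\bot}{\varphi}$ with $\vDash \exists\overline{x}.\varphi$, and this derives $\bot$ from $\calC$. The substantive direction is: if $\calC$ is unsatisfiable, every fair run returns \texttt{UNSAT}.

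First I fix a reference derivation. Since hyper-SLD-resolution is refutationally complete, unsatisfiability of $\calC$ yields a finite sequence $G_0, G_1, \dots, G_k$. Here $G_0 \in \calG$, and $G_{j+1} \in HSLD(\calR, \{G_j\})$ is obtained from $G_j$ by resolving against rules. The last goal $G_k$ resolves with facts of $\calF$ to a hyper-resolvent whose constraint is $\calT$-satisfiable. I take this normal form, in which rules are applied first and facts only at the very end, from the completeness of SLD under arbitrary selection. Facts are never added to the state, so the final cut is exactly a $\Conf$ check.

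The core step is an induction on $j$ showing that a fair run reaches a state $\calF \mid \calR \mid \calG_j$ with $\calG_j \vDash G_j$. For $j=0$ this holds because $G_0 \in \calG$. For the inductive step, apply \Cref{lemma:hsld_monotonicity} with $\calG_1 = \calG_j$ and $\calG_2 = \{G_j\}$. This gives $\calG_j, HSLD(\calR,\calG_j) \vDash G_{j+1}$. Each clause of $HSLD(\calR, \calG_j)$ is either already subsumed by $\calG_j$, and hence entailed by it, or lies in $HSLD_{\texttt{subsumes}}(\calR,\calG_j)$. In the latter case fairness guarantees that a later state $\calG'$ subsumes it, and hence entails it. Since this set may be infinite, I will use the last clause of \Cref{lemma:hsld_monotonicity}, together with the fact that the entailment of $G_{j+1}$ only uses finitely many resolvents (by compactness, or directly from the lemma's proof, which produces one resolvent per model point). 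Since $\calG$ only grows, a single later state $\calG_{j+1}$ entails all of them, and therefore $G_{j+1}$.

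Finally I convert entailment into a nonempty $\Conf$, and I expect this step to be the main obstacle. I would argue semantically, as in \Cref{theorem:main_fe}. Let $M$ be the model in which each predicate is interpreted by the facts of $\calF$, i.e.\ $P$ by $P(\calF)$. Since $G_k$ together with $\calF$ has a satisfiable conflict, $M \not\vDash G_k$. From $\calG_k \vDash G_k$ I get some $G'' \in \calG_k$ with $M \not\vDash G''$. Its body atoms are then witnessed by individual facts of $\calF$, and its constraint is satisfiable together with them, so $\Conf(\calG_k,\calF)\neq\emptyset$. The delicate points are two. First, $M$ must be a $\calT$-structure in which entailment between goals can be applied; if the interpretation is only definable, I instead pick the witnessing tuple as in the proof of \Cref{theorem:main_fe}. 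Second, the per-atom choice of facts must assemble into a single hyper-resolution with the goal, and hyper-resolution of a goal with several facts allows this. The run then returns \texttt{UNSAT} at that state at the latest; it cannot return \texttt{SAT} earlier, because \texttt{SAT} would be sound and contradict unsatisfiability.
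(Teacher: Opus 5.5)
Your proposal is correct and follows essentially the paper's proof. Completeness of hyper-SLD gives a rules-only derivation ending in a goal that conflicts with $\calF$, \Cref{lemma:hsld_monotonicity} together with fairness yields a reached state entailing that goal, and a semantic argument converts this entailment into a nonempty $\Conf$; your explicit induction and your interpretation $P \mapsto P(\calF)$ just make precise the paper's terse ``from the lemma it follows'' and its substitution-$\theta$ step.

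Two minor remarks. First, $HSLD(\calR,\calG_j)$ is finite up to variable renaming, so you need no compactness argument; this matters, because $\vDash_\calT$ need not be compact when $\calT$ is a fixed standard structure. Second, your appeal to soundness of \texttt{SAT} is circular but unnecessary, because the fairness-based induction already places a state with nonempty $\Conf$ inside every fair run, and $\Conf$ only grows with $\calG$.
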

\begin{proof}
    It is clear that if \texttt{BE} terminates with \texttt{UNSAT}, then $\calC$ is unsatisfiable. 
    We only need to show that if a clause set is unsatisfiable, then \texttt{BE} terminates with \texttt{UNSAT}.

    Assume that $\calC$ is unsatisfiable and define the sets $\calG_i$ inductively: $\calG_i = \calG$, $\calG_{i+1} = \calG_i \cup HSLD(\calR, \calG_i)$.
    There is some $i_0$, a goal $G: \chc{\bot}{\bigwedge_{k=1}^n P_k(\overline{x_k})}{\phi(\overline{x_1},\dots, \overline{x_n})}$ in $\calG_{i_0}$ and facts $F_k : \chcfact{P_k(\overline{x_k})}{\iota_k(\overline{x_k})}$ in $\calF$ s.t.\ $\vDash \exists \overline{x_1},\dots, \overline{x_n}.(\bigwedge_{k=1}^n\iota_k(\overline{x_k} \land \phi(\overline{x_1},\dots, \overline{x_n}))$.
    From Lemma~\ref{lemma:hsld_monotonicity} it follows that every fair execution of \texttt{BE} reaches a state $\calF \mid \calR \mid \calG^*$ s.t.\ $\calG^* \vDash G$.
    Pick any model $M$ and any $\overline{a_1},\dots, \overline{a_k}$ s.t.\ $M \vDash \bigwedge_{k=1}^n\iota_k(\overline{a_k}) \land \phi(\overline{a_1},\dots, \overline{a_n})$.
    There is a goal $G^* : \chc{\bot}{\bigwedge_{k=1}^m Q_k(\overline{y_k})}{\psi(\overline{y_1}, \dots, \overline{y_m})}$ in $\calG^*$ s.t.\ $M \vDash (\bigwedge_{k=1}^n P_k(\overline{a_k}) \land \phi(\overline{a_1},\dots, \overline{a_n})) \rightarrow \exists \overline{y_1}, \dots, \overline{y_m}.\bigwedge_{k=1}^mQ_k(\overline{y_k}) \land \psi(\overline{y_1},\dots, \overline{y_m})$.
    It follows that every $Q_k$ is some $P_l$ and that there is some substitution $\theta$ mapping $\overline{y_k}$ to $\overline{x_l}$ s.t.\ $\psi^*(\overline{x_1},\dots, \overline{x_n}) \equiv \psi\theta$ has the property that $M \vDash \phi(\overline{a_1}, \dots, \overline{a_n}) \rightarrow \psi^*(\overline{a_1}, \dots, \overline{a_n})$.
    By hyper-resolving every $Q_k$ with the appropriate $P_l$, we obtain the clause $\chcfact{\bot}{\bigwedge_{i=1}^k \iota_k(\overline{x_k})\land \psi(\overline{y_1}, \dots, \overline{y_m}) \land \bigwedge_{k=1}^m \overline{y_k}} = \overline{y_k}\theta$.
    It follows that $M \vDash \exists \overline{x_1},\dots ,\overline{x_n},\overline{y_1},\dots, \overline{y_m}.(\bigwedge_{i=1}^k \iota_k(\overline{x_k})\land \psi(\overline{y_1}, \dots, \overline{y_m}) \land \bigwedge_{k=1}^m \overline{y_k} = \overline{y_k}\theta)$. 
    Since $M$ was arbitrary, \texttt{BE} terminates with \texttt{UNSAT}.
\end{proof}
\subsection{Constructing Models from \texttt{BE}}
In \cref{sec:forward}, we were able to give a nice method to construct models from saturated sets of facts after executing \texttt{FE}.
Maybe surprisingly, this is much more intricate for \texttt{BE}.
The two main factors contributing to this are non-linearity of goals and the potential non-existence of greatest fixed points.
Assume that we are given a set of CHCs $\calC = \calF \cup \calR \cup \calG$.

One view of \texttt{FE} is that we under-approximate the set of states that is necessarily contained in every solution of $\calC$ by deriving facts until the set of facts stabilizes, meaning that we reached the least fixed point of $\calF \cup \calR$.
If we have not derived a contradiction in the process, then the least fixed point of $\calF \cup \calR$ is a model of $\calC$.
Moreover, we can write it down explicitly as the disjunction of the facts. 
This is possible because the facts are linear and therefore do not display any dependencies between different predicate symbols.

If we instead use \texttt{BE}, then we over-approximate the set of states that are necessarily not contained in any solution of of $\calC$ by deriving new goals.
Intuitively, we aim for the greatest fixed point of $\calC$.
However, there does not have to be a greatest fixed point of $\calC$ and, more importantly, \texttt{BE} can also terminate with \texttt{SAT} on a clause set if it does not possess a greatest fixed point.  
This begs the question, which fixed point we would even like to define.
The obstacle that prevents us from constructing \emph{any} solution explicitly is non-linearity: A goal like $\chcfact{\bot}{P(x) \land Q(x)}$ tells us that $P$ and $Q$ are disjoint in any model.
In particular, the interpretations of $P$ and $Q$ depend on each other and we cannot define them separately as it was the case with \texttt{FE}.
Nonetheless, if the saturated set of goals only contains linear clauses, then we can extract a model in a way that is dual to what we did with \texttt{FE}.
\begin{definition}
    Let $\calG$ be a finite set of linear goals and $P$ a predicate symbol. 
    We define $\calG_P = \{\phi(\overline{x}) \mid \chc{\bot}{P(\overline{x})}{\phi(\overline{x})} \in \calG\}$. W.l.o.g., we assume that all $\phi \in\calG_P$ have the same free variables.
    We define $P(\calG) = \bigwedge_{\phi \in \calG_P} \neg\phi$.
\end{definition}
The following theorem is standard and thus, we state it without proof:
\begin{theorem}
    Let $\calC = \calF \cup \calR \cup \calG$ be a set of CHCs. If \texttt{BE} terminates with \texttt{SAT} on $\calC$ and the last state of \texttt{BE} before terminating is $\calF \mid \calR \mid \calG^*$ s.t.\ every goal in $\calG$ is a linear clause, then the greatest model of $\calC$ is given by interpreting each $P$ with $P(\calG^*)$.
\end{theorem}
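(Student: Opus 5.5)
Write $M^*$ for the structure that extends a fixed model $M$ of the background theory by interpreting each predicate $P$ as the set of tuples satisfying $P(\calG^*)=\bigwedge_{\phi\in\calG^*_P}\neg\phi$. I read the hypothesis as requiring every goal in the final set $\calG^*$ to be linear, since only then is $P(\calG^*)$ defined. The argument dualises the least-model argument for \texttt{FE} and has two parts. First, $M^*$ is a model of $\calC$. Second, every model $M'$ of $\calC$ over $M$ satisfies $M'(P)\subseteq M^*(P)$ for all $P$. The first part shows that $M^*$ is a model; the second shows that it is the greatest one.

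\textbf{$M^*$ is a model.} I check each of the three kinds of clauses.
\begin{enumerate}
\item \emph{Goals.} Take $\chc{\bot}{P(\overline{x})}{\phi}\in\calG^*$. Then $\phi\in\calG^*_P$, so $M^*(P)$ is contained in the set defined by $\neg\phi$, and the goal holds. Goals with empty body are satisfied by $M^*$ as well. If one were falsified, $\Conf$ would be nonempty, since resolving a bodiless goal with no facts yields the goal itself; this contradicts termination with \texttt{SAT}.
\item \emph{Facts.} Take $\chcfact{P(\overline{t})}{\iota}\in\calF$ and a valuation satisfying $\iota$. Suppose $\overline{t}$ violates $P(\calG^*)$, i.e.\ some goal $\chc{\bot}{P(\overline{x})}{\phi}$ in $\calG^*$ has $\phi(\overline{t})$ true. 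Then resolving that goal with the fact gives a resolvent with satisfiable constraint, so $\Conf(\calG^*,\calF)\neq\emptyset$. This contradicts termination with \texttt{SAT}.
\item \emph{Rules.} Take $\chc{P(\overline{t})}{B}{\psi}\in\calR$ with body atoms $Q_j(\overline{r_j})$, a valuation satisfying $\psi$, and $M^*\vDash Q_j(\overline{r_j})$ for all $j$. Suppose $\phi(\overline{t})$ holds for some goal $G=\chc{\bot}{P(\overline{x})}{\phi}$ in $\calG^*$. Hyper-SLD resolution of $G$ with the rule yields $G'=\chc{\bot}{B}{\phi(\overline{t})\land\psi}$, and $G'\in HSLD(\calR,\calG^*)$. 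Because the final state has $HSLD_{\texttt{subsumes}}=\emptyset$, we get $\texttt{subsumes}(\calG^*,G')$, hence $\calG^*\vDash G'$. Since $M^*$ satisfies every goal in $\calG^*$, $M^*\vDash G'$, which contradicts the chosen valuation.
\end{enumerate}

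\textbf{$M^*$ is greatest.} Every clause in $\calG^*$ is a sound consequence of $\calC$. The \texttt{Unfold} steps only add hyper-SLD resolvents, which are entailed by $\calC$. Hence any model $M'$ of $\calC$ satisfies every linear goal $\chc{\bot}{P(\overline{x})}{\phi}$ in $\calG^*$, i.e.\ $M'(P)\cap\phi=\emptyset$ for each $\phi\in\calG^*_P$. Therefore $M'(P)\subseteq M^*(P)$.

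\textbf{Main obstacle.} The delicate step is the rule case. There, subsumption only guarantees entailment of $G'$ by the whole set $\calG^*$, not by a single goal. The argument works because we only need $M^*\vDash\calG^*$, which was established in the goal case. That case itself depends on all goals being linear, which is exactly why linearity is required.
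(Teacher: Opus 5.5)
Your proof is correct. The paper gives no proof of this theorem (it calls the result standard and states it without proof), so there is no argument to compare against. What you give is the natural dual of the \texttt{FE} least-model argument, and it fills that gap:
\begin{itemize}
\item facts are handled by $\Conf(\calG^*,\calF)=\emptyset$;
\item rules are handled by $HSLD_{\texttt{subsumes}}(\calR,\calG^*)=\emptyset$, combined with soundness of \texttt{subsumes} and the previously established $M^*\vDash\calG^*$, so there is no circularity;
\item maximality follows from soundness of hyper-SLD resolution, which gives $M'\vDash\calG^*$ for every model $M'$ of $\calC$.
\end{itemize}
You are also right that the hypothesis must be read as linearity of $\calG^*$ rather than of $\calG$. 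Otherwise $P(\calG^*)$ is undefined. Moreover, with non-linear rules such as $P(x)\cmark Q(x)\csep R(x)$ there need not be a greatest model at all.

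Two implicit assumptions, both inherited from the paper's setup, are worth stating explicitly.
\begin{itemize}
\item In the fact case and the bodiless-goal case, you pass from ``the resolvent's constraint is satisfiable in the fixed model $M$'' to $\Conf\neq\emptyset$. However, $\Conf$ tests $\vDash\exists\bar x.\varphi$, which means truth in every model of $\calT$. The step is therefore only valid when $\calT$ is complete, for example when it is given by a single structure such as LIA. That is also what makes ``the greatest model'' meaningful in the first place.
\item Goals in $\calG^*$ generally have atoms $P(\bar r)$ with non-variable arguments and extra variables $\bar y$. So each element of $\calG^*_P$ should be read as $\exists\bar y.\,(\bar x=\bar r\land\phi)$. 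Your case analysis goes through unchanged under this normalization.
\end{itemize}
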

The theorem above only deals with the case of linear goals.
We leave the question of model construction in the general case for future research.

\section{Backward Execution vs.\ $k$-Induction}\label{sec:k-induction}
So far, we have only considered CHCs as representation for programs and imposed no syntactic restrictions on them to run \texttt{FE} or \texttt{BE}.
However, there is a plethora of other ways to represent programs, some of which allow us to employ methods to prove their safety that might not be suited for arbitrary sets of CHCs.
In this section, we present $k$-induction for a restricted fragment of CHC-sets and show that \texttt{BE} can be considered to be a natural extension of $k$-induction in the setting of arbitrary sets of CHCs.

The notion $k$-induction was first introduced in~\cite{DBLP:conf/fmcad/SheeranSS00} for transition systems. Transition systems can easily be translated to set of CHCs, but we can also simply define $k$-induction for sets of CHCs directly if we restrict the sets of CHCs that we consider. Assume that we are given a set of CHCs $\calC$ of the following form:
\begin{equation}\label{eq:definition_kind}
    \{\chcfact{P(\overline{x})}{\iota(\overline{x})}, \ \chc{P(\overline{y})}{P(\overline{x})}{\tau(\overline{x}, \overline{y})}, \ \chc{\bot}{P(\overline{x})}{\beta(\overline{x})} \}.
\end{equation}
Note that we can translate any transition system into a set of the form \eqref{eq:definition_kind}.
We can now verify that $\calC$ is safe by finding some $k > 0$ s.t.\ the following two statements hold:
\begin{description}
    \item[Base case] All paths of length at most $k$ starting from an initial state are safe.
    \item[Induction step] Every path of length $k+1$ whose prefix of length $k$ is safe is also safe.
\end{description}
More formally, these two statements correspond to the following logical formulas:
\begin{align}
    &\vDash \left(\iota(\overline{x_1}) \land \bigwedge_{i=1}^{k-1} \tau(\overline{x_i}, \overline{x_{i+1}}) \right) \rightarrow \bigwedge_{i=1}^k \neg \beta(\overline{x_i})\label{eq:kind_base_case} \\
    &\vDash \left(\bigwedge_{i=1}^k \neg \beta(\overline{x_i}) \land \bigwedge_{i=1}^{k} \tau(\overline{x_i}, \overline{x_{i+1}})\right) \rightarrow \neg \beta(\overline{x_{k+1}}) \label{eq:kind_step}
\end{align}
Evidently, $k$-induction is sound: If no error-state is reachable in $k$ steps and whenever we have a chain of $k$ safe states, every state we add to the chain is safe as well, then every reachable state is safe.
The motivating example in \Cref{fig:backward} illustrates this with $k=2$.

While this formulation is straightforward for clause sets that have the form of~\eqref{eq:definition_kind}, it is difficult to define a similar satisfiability criterion for more general cases.
Although non-linearity in rules can be dealt with, there is no formulation of $k$-induction for non-linear goals that we are aware of.

Let us now go back to \texttt{BE}: Applying \texttt{BE} to $\calC$ for $k$-steps yields a set of goals of the form $\calG_k = \{\chc{\bot}{P(\overline{x_1})}{\bigwedge_{i=1}^{l-1} \tau(\overline{x_i}, \overline{x_{i+1}}) \land \beta(\overline{x_l})} \mid l \leq k\}$.
Note the following: Checking that $\Conf(\calG_k, \calF) = \emptyset$ corresponds precisely to checking that~\eqref{eq:kind_base_case} holds.
Moreover, if $HSLD_{\texttt{subsumes}}(\calR, \calG_k) = \emptyset$, then this implies that~\eqref{eq:kind_step} holds.
If we require that \texttt{subsumes} is not only sound, but also complete for this special fragment of goal clauses, then the converse holds: If~\eqref{eq:kind_step} holds, then $HSLD_{\texttt{subsumes}}(\calR, \calG_k) = \emptyset$.
Thus, in this case, we have a precise match between $k$-induction and \texttt{BE}, which justifies our view to consider \texttt{BE} as a generalization of $k$-induction in the case or arbitrary Horn clauses.

Note that this connection between $k$-induction and \texttt{BE} is not new and was already observed in the paper introducing $k$-induction~\cite{DBLP:conf/fmcad/SheeranSS00}.
A similar observation was made by Chalupa and Strej\v{c}ek~\citep{DBLP:conf/sas/ChalupaS21} for backward symbolic execution, and by Bj{\o}rner et al.~\citep{BjornerGMR15} who recast $k$-induction as a fold-unfold transformation on linear Horn clauses. 
However, classical $k$-induction and these formulations are restricted to linear clauses. Backward resolution is not: it applies to arbitrary sets of CHCs, including non-linear ones.

\section{Subsumption}\label{sec:subsumption}
In \Cref{sec:forward} and \Cref{sec:backward}, we have assumed to have a procedure \texttt{subsumes} that is a sound and decidable approximation of logical entailment between a finite set of clauses and another clause.
The reason we did this is twofold: On the one hand, we would like to be able to check for logical entailment, in order to add some redundancy criterion to our algorithms and be able to derive that some clause set is satisfiable.
On the other hand, we cannot work with logical entailment directly as first-order logic is famously undecidable.
In this section, we will give examples of decidable subsumption criteria with varying strength, meaning that some are better approximations of logical entailment than others at the expense of being more complicated. Each of these criteria naturally leads to a procedure \texttt{subsumes} with the desired properties from \Cref{sec:forward}. 
Moreover, we will see that there are notions of subsumption that are not only sound, but also complete if we only consider (sets of) facts or (sets of) goals.

Starting from syntactic equality, the first non-trivial version of subsumption that we consider is commonly known as $\theta$-subsumption:
\begin{definition}
    Let $C_1$ and $C_2$ be two constrained Horn clauses with $C_i: \chc{H_i}{B_i}{\phi_i}$. $C_1$ $\theta$-\emph{subsumes} $C_2$ if there is a variable substitution $\theta$ with the following properties:
    \begin{enumerate}
        \item $H_1 \theta \supseteq H_2$
        \item $B_1 \theta \subseteq B_2$
        \item $\vDash \phi_2 \rightarrow \phi_1\theta$
    \end{enumerate}
\end{definition}
Note that $\theta$-subsumption only deals with the question of whether one clause $C_1$ entails another clause $C_2$.
We can easily lift it to our setting, where we ask the question if a whole clause set $\calC$ entails some clause $C$, by checking if there is any clause $C'$ in $\calC$ s.t.\ $C'$ $\theta$-subsumes $C$. 
Clearly, if $\calC$ $\theta$-subsumes $C$, then $\calC \vDash C$.
However, the converse is not necessarily the case:
\begin{example}\label{example:failed_subsumption}
    Consider the clauses  $C_1 : \chc{\bot}{E(z)}{z \% 2 = 1}$ and $C_2 : \chc{\bot}{E(x)\land E(y)}{(x+y) \% 2 = 1}$.
    Then, $C_1 \vDash C_2$, but $C_1$ does not subsume $C_2$ because for no substitution $\theta$ does it hold that $(x+y) \% 2 = 1 \vDash (z \% 2 = 1)\theta$.
    Informally, one could say that if the sum of two numbers is odd, then one of summands is odd as well, but you do not know which one beforehand.
\end{example}
The example above illustrates the potential shortcomings of plain subsumption: It is uniform in the sense that there has to be one $\theta$ that works for all cases.
We can use this as a motivation to define more general versions of subsumption that cover more cases, where uniformity is too much to ask for:
\begin{definition}
    Let $C_1$ and $C_2$ be constrained Horn clauses with $C_i: \chc{H_i}{B_i}{\phi_i}$. A substitution $\theta$ is $C_1$-$C_2$-feasible if the following holds
    \begin{enumerate}
        \item $H_1 \theta \supseteq H_2$
        \item $B_1 \theta \subseteq B_2$
    \end{enumerate}
    Let $\Theta(C_1,C_2)$ be the set of all $C_1$-$C_2$-feasible substitutions.
    $C_1$ \emph{disjunctively subsumes} $C_2$ if it holds that 
    \[
        \vDash \phi_2 \rightarrow \bigvee_{\theta \in \Theta(C_1,C_2)} \phi_1 \theta
    \]
\end{definition}
Note that $\Theta(C_1, C_2)$ is finite and the formulas above are well-formed.
Again, we can lift disjunctive subsumption to the our setting by checking it for each clause in the set and it holds that if $\calC$ disjunctively subsumes $C$, then $\calC \vDash C$, meaning that disjunctive subsumption is sound.
Moreover, the notion disjunctive subsumption is strictly stronger than regular subsumption as the Example~\ref{example:failed_subsumption} shows.
However, there are still some cases, where a clause $C$ is entailed by a clause set $\calC$, but $\calC$ does not disjunctively subsume $C$. Moreover, there are cases that seem easy enough that we would want our notion of subsumption to hold. 
Consider the following example:
\begin{example}\label{example:failed_disjunctive_subsumption}
    Consider the clauses $C_1 : \chc{P(x)}{Q(x)}{0 \leq x \leq 7}$, $C_2 : \chc{P(y)}{Q(y)}{2 \leq y \leq 10}$ and $C_3: \chc{P(z)}{Q(z)}{0 \leq z \leq 10}$.
    It holds that $C_1\land C_2 \vDash C_3$, but neither $C_1$ nor $C_2$ (disjunctively) subsume $C_3$.
\end{example}
In the example above, the problem is not uniformity, but that the conjunction of the clauses $C_1$ and $C_2$ contains more information than $C_1$ and $C_2$ separately and both pieces of information are necessary to prove the entailment.
We can easily fix this by extending our notion of disjunctive subsumption: 
\begin{definition}
    Let $C^* : \chc{H^*}{B^*}{\phi^*}$ be a constrained Horn clause and $\calC$ a finite set constrained Horn clauses with $C: \chc{H_C}{B_C}{\phi_C}$ for $C$ in $\calC$. $\calC$ disjunctively subsume $C^*$ if it holds that
    \[
        \vDash \phi^* \rightarrow \bigvee_{C\in\calC} \bigvee_{\theta \in \Theta(C, C^*)} \phi_C\theta
    \]
\end{definition}
Note that this definition coincides with the previous definition of disjunctive subsumption if $n=1$.
Thus, we replace the previous definition of disjunctive subsumption with this more general one.
Disjunctive subsumption is still sound and it captures very natural cases like Example~\ref{example:failed_disjunctive_subsumption}.
However, there are still examples of clause sets that logically entail another clause, but where disjunctive subsumption fails:
\begin{example}
    Consider the clauses $C_1 : \chcfact{P(x)}{x = 0}$, $C_2 : \chcfact{Q(y)}{P(y)}$ and $C_3 : \chcfact{Q(x)}{x = 0}$. Then $C_1 \land C_2 \vDash C_3$, but $\{C_1, C_2\}$ does not disjunctively subsume $C_3$ since $\Theta(C_1, C_3) = \Theta(C_2,C_3) = \emptyset$ and the constraint of $C_3$ is satisfiable. 
\end{example}
The example above shows that any version of subsumption defined so far does not fully capture logical entailment.
Moreover, as mentioned above, any attempt to fully capture logical entailment while staying decidable is doomed to fail.
Thus, we will not try to refine our notion subsumption anymore. 
We close this section by proving that disjunctive subsumption as defined above, while generally for all clause sets, is complete if we restrict the setting to goals or facts: 
\begin{proposition}
    Let $\calF$ be finite set of facts and $F^*$ some fact. It holds that $\calF \vDash F^*$ if and only if $\calF$ disjunctively subsumes $F^*$.
\end{proposition}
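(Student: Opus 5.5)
The ``if'' direction is the soundness of disjunctive subsumption noted above, so we only need to show that $\calF \vDash F^*$ implies that $\calF$ disjunctively subsumes $F^*$. First we normalise the facts. Every fact $\chcfact{P(\overline{t})}{\iota}$ can be rewritten as $\chcfact{P(\overline{x})}{\exists \overline{y}.\, \overline{x} = \overline{t} \land \iota}$ with fresh variables $\overline{x}$. We therefore assume that all facts have heads $P(\overline{x})$ over pairwise distinct variables; if quantified constraints are undesirable, we instead keep the equalities $\overline{x}=\overline{t}$ in the constraint and read all constraints up to existential closure of non-head variables. Let $F^* : \chcfact{P(\overline{x})}{\phi^*(\overline{x})}$. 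Facts have empty bodies, so $C$-$F^*$-feasibility of $\theta$ only requires $H_C\theta = P(\overline{x})$. Hence $\Theta(C,F^*)$ is nonempty exactly for the facts $C \in \calF$ whose head symbol is $P$, and for these it contains the renaming of $C$'s head variables to $\overline{x}$. It thus suffices to show
\[
\vDash \phi^*(\overline{x}) \rightarrow \bigvee_{\iota \in \calF_P} \iota(\overline{x}),
\]
i.e.\ that $\phi^*$ entails $P(\calF)$ in the notation of \Cref{sec:forward}.

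We prove this by contraposition, building a countermodel. Suppose there is a structure $M$ of the background theory and an assignment $\overline{a}$ with $M \vDash \phi^*(\overline{a})$ and $M \nvDash \iota(\overline{a})$ for all $\iota \in \calF_P$. Extend $M$ to the uninterpreted predicates by interpreting every predicate $Q$ as the set defined by $Q(\calF)$ in $M$, that is, as the least model of $\calF$ regarded as a set of clauses without rules. Every fact in $\calF$ holds in this extension by construction, since each fact's constraint is one of the disjuncts of its predicate's interpretation. But $P(\overline{a})$ is false while $\phi^*(\overline{a})$ is true, so $F^*$ fails. Hence $\calF \nvDash F^*$, which is the contrapositive.

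The main subtlety is the normalisation step. If heads contain theory terms or repeated variables, the substitutions in $\Theta$ must still witness the match. Concretely, we need to check that the matching condition $H_C\theta = H^*$ (or $\supseteq$ in the definition) together with the variable disjointness convention really gives exactly the renamings, so that the disjunction over $\Theta$ coincides with $P(\calF)$ instantiated at the arguments of $F^*$. We handle this by fixing the convention that heads are in variable form, which is w.l.o.g.\ by the rewriting above. A second point is that the entailment $\vDash$ is relative to the class $\calM_\calT$: the countermodel must be built from a theory model $M$ in that class, which is exactly what we do, since only predicate interpretations are added.
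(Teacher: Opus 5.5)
Your proof is correct and is essentially the paper's argument. Both normalise heads to variable form and use the renamings of the head variables of each $F \in \calF$ with head symbol $P$ to reduce disjunctive subsumption to $\vDash \phi^* \rightarrow \bigvee_{\iota \in \calF_P} \iota$. Both then derive this from $\calF \vDash F^*$. Your least-model countermodel supplies the justification for the step the paper only asserts, and it also makes the paper's case split on unsatisfiable $\iota^*$ unnecessary.

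One caveat, which you share with the paper: disjunctive subsumption is syntactic, so the variable-form normalisation restricts the statement rather than being a genuine w.l.o.g.\ step. For example, $\{\chcfact{P(x,x)}{\true}\}$ entails $\chcfact{P(y,z)}{y = z}$ but does not disjunctively subsume it.
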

\begin{proof}
    Assume that $\calF \vDash F^*$ and that $F^*$ has the form $\chcfact{P(\overline{x})}{\iota^*(\overline{x})}$.
    There are two cases: If $\vDash \forall\overline{x}.\neg \iota^*(\overline{x})$, then $\calF$ trivially disjunctively subsumes $F^*$.
    Otherwise, we can assume w.l.o.g., that all $F \in \calF$ have the form $\chcfact{P(\overline{x_F})}{\iota_F(\overline{x_F}})$, i.e.\ they all have the same predicate in the head.
    Since $\calF \vDash F^*$, it holds that for any model $M$ and any $\overline{a}$, there is some $F \in \calF$ with $M \vDash \iota^*(\overline{a}) \rightarrow \iota_F(\overline{a})$.
    Since $M$ and $\overline{a}$ were arbitrary, it follows that $\vDash \iota^*(\overline{x}) \rightarrow \bigvee_{F \in \calF} \iota_F(\overline{x})$.
    By defining the substitution $\theta_F = \{\overline{x_F} \mapsto \overline{x}\}$, this is equivalent to $\vDash \iota^*\rightarrow \bigvee_{F \in \calF} \iota_F\theta_F$, which fits our definition of disjunctive subsumption.
\end{proof}
\begin{proposition}
    Let $\calG$ be a finite set of goal clauses and $\calG^*$ some goal clause. It holds that $\calG \vDash G^*$ if and only if $\calG$ disjunctively subsume $G^*$.
\end{proposition}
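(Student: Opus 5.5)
The ``if'' direction is soundness of disjunctive subsumption, already noted after its definition, so the work is in ``only if''. Write $G^* : \chc{\bot}{B^*(\overline{x})}{\phi^*(\overline{x})}$ with $B^* = Q_1(\overline{t_1}) \land \dots \land Q_m(\overline{t_m})$ and each $G \in \calG$ as $\chc{\bot}{B_G(\overline{u_G})}{\phi_G(\overline{u_G})}$. Since $\bot \theta \supseteq \bot$ always holds for goals, $\Theta(G, G^*)$ is the finite set of substitutions mapping the atoms of $B_G$ into atoms of $B^*$. We must show $\vDash \phi^* \rightarrow \bigvee_{G \in \calG} \bigvee_{\theta \in \Theta(G,G^*)} \phi_G\theta$.

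I will argue by contraposition via a Herbrand-style minimal interpretation, in the spirit of the proof of \Cref{lemma:hsld_monotonicity}. Suppose some $M \in \calM_\calT$ and assignment $\beta$ satisfy $\phi^*$ but none of the disjuncts. Extend $M$ to $M'$ by interpreting each uninterpreted predicate $Q$ as the \emph{smallest} set forced by $B^*$ under $\beta$, namely $\{ val_{M,\beta}(\overline{t_i}) \mid Q_i = Q\}$. Then $M',\beta \vDash B^* \land \phi^*$, so $G^*$ is false in $M'$. Because $\calG \vDash G^*$, some $G \in \calG$ is false in $M'$ under some assignment $\gamma$, i.e.\ $M',\gamma \vDash B_G(\overline{u_G}) \land \phi_G(\overline{u_G})$.

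The key step is turning $\gamma$ into a feasible substitution. Each atom $P(\overline{s})$ of $B_G$ holds in $M'$, so by minimality its value equals $val_{M,\beta}(\overline{t_i})$ for some atom $Q_i(\overline{t_i})$ of $B^*$ with $Q_i = P$. Choosing such an $i$ for every body atom of $G$ yields a map from the atoms of $B_G$ to those of $B^*$. I would first rename so that body arguments of goals are distinct variables (pushing terms into the constraint by fresh equalities, which preserves both entailment and the form of the definitions). Then this map is induced by a substitution $\theta$ sending $\overline{u_G}$ to the corresponding $\overline{t_i}$, and $\theta \in \Theta(G, G^*)$. Moreover, $\gamma$ agrees with $val_{M,\beta} \circ \theta$ on $\overline{u_G}$, so $M,\beta \vDash \phi_G\theta$. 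This contradicts the choice of $\beta$.

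The main obstacle is this normalisation step. If a variable of $G$ occurs in two body atoms, the two chosen $B^*$ atoms may have different argument terms that merely evaluate equally. Linearising the body variables (equalities moved into $\phi_G$) resolves this, and I must check that the definition of $\Theta$ is robust under this rewriting. Variables of $G$ not occurring in its body cause no difficulty: the body is then linear, and such variables are left existentially quantified in $\phi_G\theta$.
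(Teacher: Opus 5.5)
Your core argument is the same as the paper's, and more careful. The paper goes from ``some goal $G\in\calG$ is violated'' to a $G$-$G^*$-feasible $\theta$ merely ``since $B_G$ contains only uninterpreted predicate symbols''. Your minimal extension $Q \mapsto \{val_{M,\beta}(\overline{t_i}) \mid Q_i = Q\}$ is exactly what justifies that step.

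The genuine gap is the normalisation you use to turn the chosen atom map into a substitution. Disjunctive subsumption is defined by the syntactic condition $B_G\theta \subseteq B^*$. Replacing a goal of $\calG$ by an equivalent one whose body arguments are distinct variables can therefore enlarge $\Theta(G,G^*)$. You would then have shown only that the \emph{normalised} set subsumes $G^*$, and nothing carries this back to $\calG$. It cannot be carried back, because the proposition is false for unrestricted goals. Let $\calG=\{G\}$ with $G : \bot \leftarrow P(y) \land Q(y)$, and let $G^* : \bot \leftarrow P(x_1) \land Q(x_2) \land x_1 = x_2$. The two clauses are equivalent, so $\calG \vDash G^*$. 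But no $\theta$ sends $y$ to both $x_1$ and $x_2$, so $\Theta(G,G^*)=\emptyset$, and subsumption would require $\vDash \neg(x_1 = x_2)$. Likewise, over LIA, $\bot \leftarrow P(y+1)$ entails $\bot \leftarrow P(x)$, yet $(y+1)\theta$ is never syntactically $x$. So $\Theta$ is not robust under the rewriting you intended to check, and no argument can close this step as stated.

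The repair is to make the normal form a hypothesis rather than a rewriting. Assume that in every goal of $\calG$ the body arguments are pairwise distinct variables; $G^*$ may stay arbitrary. The paper's proof needs this as well, since its step producing a feasible $\theta$ fails on the example above, but it never states it. Under this hypothesis, every predicate-respecting map from the atoms of $B_G$ to those of $B^*$ induces a unique $\theta\in\Theta(G,G^*)$ on the body variables. Your argument then goes through as written, provided two things hold. First, $G$ is renamed apart from $G^*$. Second, the constraint-only variables of $G$ are read existentially in $\phi_G\theta$. Restricting $\theta$ to body variables is also what makes $\Theta(C_1,C_2)$ finite, as the paper asserts.
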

\begin{proof}
    Assume that every goal $G$ in $\calG$ has the form $\chc{\bot}{B_G}{\psi_G}.$
    Assume that $\calG \vDash G^*$ and that $G^*$ has the form $\chc{\bot}{B^*(\overline{x})}{\phi^*(\overline{x})}$.
    Since $\calG \vDash G^*$, it holds that for every model $M$ and every $\overline{a}$, there is some goal $G : \chc{\bot}{B_G(\overline{y})}{\psi_G(\overline{y})}\in \calG$ with $M \vDash (B^*(\overline{a}) \land \phi^*(\overline{a}))\rightarrow \exists \overline{y}(B_G(\overline{y}) \land \psi_G(\overline{y}))$.
    Since $B_G$ contains only uninterpreted predicate symbols, there is $G$-$G^*$-feasible substitution $\theta$ s.t.\ $M \vDash (B^*(\overline{a}) \land \phi^*(\overline{a})) \rightarrow (B_G \land \psi_G)\theta(\overline{a})$.
    Since $M$ and $\overline{a}$ were arbitrary, it holds that $\vDash (B^* \land \phi^*) \rightarrow \bigvee_{G \in \calG} \bigvee_{\theta \in \Theta(G, G^*)} (B_G \land \psi_G)\theta$. In particular, $\vDash \phi^* \rightarrow \bigvee_{G \in \calG} \bigvee_{\theta \in \Theta(G, G^*)} \psi_G\theta$, which proves the claim.
\end{proof}

\section{Symbolic Execution and Incorrectness Logic}
\label{sec:incorrectness}

\emph{Incorrectness logic} is a program logic to reason about failing program executions; it has been proposed as a formalism for contracts documenting cases in which a function is supposed to terminate with an error, or to concisely represent counterexamples to correctness~\citep{DBLP:journals/pacmpl/OHearn20}. Incorrectness logic can be seen as the dual of Hoare logic.

\begin{center}
\begin{minipage}{0.4\linewidth}
\begin{equation}
    \{ p(\bar x) \} \,S\, \{ q(\bar x) \}
    \label{eq:hoare}
\end{equation}    
\end{minipage}
\qquad
\begin{minipage}{0.4\linewidth}
\begin{equation}
    [ p(\bar x) ] \,S\, [ q(\bar x) ]
    \label{eq:incorrect}
\end{equation}    
\end{minipage}
\end{center}
For a Hoare triple~\eqref{eq:hoare} to be valid, it has to be the case that the \emph{postcondition}~$q(\bar x)$ holds in every final state~$\bar x'$ of the statement~$S$ when started in a state~$\bar x$ satisfying the \emph{precondition}~$p(\bar x)$~\cite{Hoare1969}; in other words $q(\bar x)$ is implied by the strongest postcondition of $p(\bar x)$ and $S$. In contrast, an incorrectness logic triple~\eqref{eq:incorrect} is valid if for every state~$\bar x'$ satisfying the \emph{result}~$q(\bar x)$, there is some state~$\bar x$ satisfying the \emph{presumption}~$p(\bar x)$ from which the statement~$S$ terminates in state~$\bar x'$; the result~$q(\bar x)$ has to imply the strongest postcondition. If the semantics of the statement~$S$ is represented as a formula~$T_S(\bar x, \bar x')$ over two copies of the state variables, then the meaning of a Hoare triple is expressed by \eqref{eq:HoareAsFormula} and the meaning of an incorrectness triple by \eqref{eq:incorrectAsFormula}:
\begin{gather}
    \forall \bar x. \forall \bar x'.\;
    p(\bar x) \wedge T_S[\bar x, \bar x'] \to q(\bar x') \label{eq:HoareAsFormula}\\
    \forall \bar x'.\; q(\bar x') \to \exists \bar x.\; p(\bar x) \wedge T_S[\bar x, \bar x'] \label{eq:incorrectAsFormula}
\end{gather}

Since~\eqref{eq:HoareAsFormula} is Horn, it can naturally be encoded in the form of CHCs, making it possible to infer pre- and postconditions using CHC solvers. \emph{It is therefore tempting to ask whether a similar encoding is possible for incorrectness logic, so that the existing machinery of CHC can be put to use also for reasoning about incorrectness?}

The existential quantifier in \eqref{eq:incorrectAsFormula} indicates that such an encoding is not straightforward. Indeed, it is easy to see that in the special case of a statement~$S$ being \emph{injective}, i.e., $\forall \bar x, \bar x'_1, \bar x'_2.\; T_S(\bar x, \bar x'_1) \wedge T_S(\bar x, \bar x'_2) \to \bar x'_1 = \bar x'_2$, formula~\eqref{eq:incorrectAsFormula} can be expressed as a CHC. In the general case, to express \eqref{eq:incorrectAsFormula} an extended language of Horn clauses with existential quantifiers would be needed, which is beyond the scope of CHC and cannot be handled by existing CHC solvers.

We would like to advocate an alternative perspective on the nature of incorrectness logic. For this, observe that no abstraction takes place in the translation from a Hoare triple~\eqref{eq:hoare} to a Horn formula~\eqref{eq:HoareAsFormula}, the formula expresses \emph{exactly} the semantics of the program and the meaning of pre- and postconditions. It is then not clear why different translations to logic would be needed for correctness vs.\ incorrectness reasoning, instead the \emph{reasoning approach} itself can be replaced. It is well-known that the construction of Hoare logic proofs can be automated with the help of \emph{model construction} for sets of CHCs. We claim that the construction of incorrectness proofs can be supported through clausal proofs about CHCs, for instance with the help of the proof rules introduced in this article:
\begin{center}
    \begin{tabular}{ll}
        \emph{Logic} & \emph{Reasoning used for automation}
        \\\toprule
        Hoare logic & Model construction for sets of CHCs \\\hline
        Incorrectness logic & Proof construction for sets of CHCs
    \end{tabular}
\end{center}

Consider a forward proof rule for CHCs corresponding to \eqref{eq:hoare}, in which $\mathit{post}_S$ is some procedure approximating the post-image of $\varphi$ under the statement~$S$:
\begin{equation}
    \label{eq:schematicForward}
    \resolvecond{
    \chcfact{p(\bar x)}{\varphi[\bar x]}
    \qquad
    \chc{q(\bar x')}{p(\bar x)}{T_S[\bar x, \bar x']}
    }{
    \chcfact{q(\bar x')}{\psi[\bar x']}
    }{
    \psi = \mathit{post}_S(\varphi)
    }
\end{equation}
For the proof rule to be \emph{sound,} i.e., for the entailment
\begin{equation}
    \label{eq:schematicForwardSound}
    \{\forall \bar x.\, (\chcfact{p(\bar x)}{\varphi[\bar x]})
    ,
    \forall \bar x, \bar x'.\,
    (\chc{q(\bar x')}{p(\bar x)}{T_S[\bar x, \bar x']})\}
    ~\models~
    \forall \bar x'.\, (\chcfact{q(\bar x')}{\psi[\bar x']})
\end{equation}
to hold, it is necessary that the constraints $\varphi[\bar x], \psi[\bar x'], T_S[\bar x, \bar x']$ are in relationship:
\begin{lemma}
    \label[lemma]{lem:forwardSoundness}
    Proof rule~\eqref{eq:schematicForward} is sound if and only if $\psi[\bar x'] \Rightarrow \exists \bar x.\, \varphi[\bar x] \wedge T_S(\bar x, \bar x')$.
\end{lemma}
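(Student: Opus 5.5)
We prove the two directions of \Cref{lem:forwardSoundness} separately. Throughout, "sound" means that the entailment~\eqref{eq:schematicForwardSound} holds for every extension of a model of the background theory~$\calT$ to the uninterpreted predicates $p, q$. The condition $\psi[\bar x'] \Rightarrow \exists \bar x.\, \varphi[\bar x] \wedge T_S(\bar x, \bar x')$ is read as $\calT$-validity of the universal closure over~$\bar x'$.

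\emph{If direction.} Assume the $\calT$-validity of the condition. Fix $M \in \calM_\calT$ and an extension $M'$ interpreting $p$ and $q$ in which both premises of~\eqref{eq:schematicForward} hold. Take any assignment $\bar a'$ for $\bar x'$ with $M' \vDash \psi[\bar a']$.
\begin{itemize}
\item By the condition there is $\bar a$ with $M \vDash \varphi[\bar a] \wedge T_S[\bar a, \bar a']$. This evaluation only involves theory symbols, so it also holds in~$M'$.
\item The fact $\chcfact{p(\bar x)}{\varphi[\bar x]}$ then gives $M' \vDash p(\bar a)$.
\item The rule $\chc{q(\bar x')}{p(\bar x)}{T_S[\bar x, \bar x']}$ then gives $M' \vDash q(\bar a')$.
\end{itemize}
Hence the conclusion $\chcfact{q(\bar x')}{\psi[\bar x']}$ holds in~$M'$.

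\emph{Only-if direction.} Use a least-model construction, in the spirit of the model extraction in \Cref{sec:forward}. Suppose the condition fails. Then there is $M \in \calM_\calT$ and $\bar a'$ with $M \vDash \psi[\bar a']$ but no $\bar a$ satisfies $\varphi[\bar a] \wedge T_S[\bar a, \bar a']$ in~$M$. Extend $M$ to $M'$ by setting
\begin{itemize}
\item $p^{M'} = \{\bar b \mid M \vDash \varphi[\bar b]\}$,
\item $q^{M'} = \{\bar b' \mid M \vDash \exists \bar x.\, \varphi[\bar x] \wedge T_S[\bar x, \bar b']\}$.
\end{itemize}
Both premises hold by construction: the fact holds because $p$ is exactly~$\varphi$, and the rule holds because any $\bar b, \bar b'$ with $p(\bar b)$ and $T_S[\bar b, \bar b']$ witness the existential defining~$q$. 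However, $\bar a' \notin q^{M'}$ while $\psi[\bar a']$ holds, so the conclusion fails in~$M'$ and the rule is unsound.

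\emph{Main obstacle.} The only delicate point is the quantifier and model bookkeeping. Entailment is defined relative to the class $\calM_\calT$, so the counter-model must be built over the same theory structure $M$ that falsifies the condition. This works because the entailment ranges over arbitrary extensions of theory models to the uninterpreted symbols $p$ and $q$, so choosing their interpretations freely, as above, is legitimate. One also relies on the constraints $\varphi$, $\psi$ and $T_S$ containing no uninterpreted predicates, which ensures their truth values agree in $M$ and~$M'$.
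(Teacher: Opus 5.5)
Your proof is correct and follows the paper's argument. For the only-if direction, the paper uses exactly your interpretation $p \leftrightarrow \varphi[\bar x]$ and $q \leftrightarrow \exists \bar x.\, \varphi[\bar x] \wedge T_S(\bar x, \bar x')$; it argues directly, noting that in such structures the conclusion reduces to the condition, rather than by contraposition. For the if direction, the paper only observes that the entailment holds once the condition is added to the antecedent, which your three-step argument makes explicit.
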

\begin{proof}
    ``$\Rightarrow$'' Assume that \eqref{eq:schematicForwardSound} holds, which means that it holds in particular for structures satisfying $p(\bar x) \leftrightarrow \varphi[\bar x]$ and $q(\bar x') \leftrightarrow \exists \bar x.\, \varphi[\bar x] \wedge T_S(\bar x, \bar x')$. The antecedent of \eqref{eq:schematicForwardSound} is valid in such structures, while the consequence is $\forall \bar x'.\, \psi[\bar x'] \to \exists \bar x.\, \varphi[\bar x] \wedge T_S(\bar x, \bar x')$.

    ``$\Rightarrow$'' Observe that the entailment holds when adding $\forall \bar x'.\, \psi[\bar x'] \to \exists \bar x.\, \varphi[\bar x] \wedge T_S(\bar x, \bar x')$ to the antecedent.
\end{proof}

The necessary and sufficient condition~$\psi[\bar x'] \Rightarrow \exists \bar x.\, \varphi[\bar x] \wedge T_S(\bar x, \bar x')$ for \eqref{eq:schematicForward} to be sound corresponds to the incorrectness formula~\eqref{eq:incorrectAsFormula}. This implies that forward symbolic execution proofs, such as the ones constructed in \Cref{sec:forward}, can directly be translated to proofs in incorrectness logic by using the derived constrained positive unit clauses (facts) as the result of incorrectness triples. Since backward proofs (\Cref{sec:backward}) can be translated to forward proofs, the same holds for backward symbolic execution.

This kind of reasoning is not limited to ground proofs, since symbolic execution naturally produces proofs that do not refer to concrete values, and hence gives rise to symbolic incorrectness proofs as well. CHC proofs can also be extended with the more specialized proof rules that have been introduced for incorrectness logic. As an example, consider the following ``Backwards Variant'' incorrectness rule:
\begin{equation}
    \resolvecond{
    [p(n) \wedge n \geq 0]\, \texttt{S}\, [p(n+1) \wedge n \geq 0]
    }{
    [p(0)]\, \texttt{while (*) S}\, [\exists n.\, p(n) \wedge n \geq 0]
    }{n : \mathbbm{Z} \text{~fresh}}
    \tag{Backwards Variant}
\end{equation}
The rule expresses that a loop \texttt{while (*) S} can be accelerated if it can be shown that $p(n)$ is a ``variant'' of the loop; $p(n)$ is required to decrease when executing the loop body~\texttt{S} backwards. We can formulate a corresponding unit propagation rule on CHCs, which enables the calculus to generalize a fact~$\chcfact{p(\bar x)}{\varphi[\bar x, 0]}$ to $\chcfact{p(\bar x)}{\varphi[\bar x, n] \wedge n \geq 0}$ if it can be shown, through a side condition, that $\varphi[\bar x, n]$ is a variant of $S$:
\begin{equation}
    \resolvecond{
    \begin{array}{c}
        \chcfact{p(\bar x)}{\varphi[\bar x, 0]} \\
        \chc{p(\bar x')}{p(\bar x)}{T_S[\bar x, \bar x']}
    \end{array}
    }{
    \chcfact{p(\bar x)}{\varphi[\bar x, n] \wedge n \geq 0}
    }{
    \begin{array}{l}
         n : \mathbbm{Z} \text{~fresh,} \\
    \varphi[\bar x, n+1] \wedge n \geq 0 \Rightarrow \mathit{post}_S(\varphi[\bar x, n])
    \end{array}
    }
\end{equation}
A more refined version of the proof rule could recover the condition~$\varphi[\bar x, n+1] \wedge n \geq 0\break \Rightarrow \mathit{post}_S(\varphi[\bar x, n])$ from the clausal proof itself. Whenever a proof contains a derivation of $\chcfact{p(\bar x)}{\varphi[\bar x, k+1]}$ from $\chcfact{p(\bar x)}{\varphi[\bar x, k]}$, for some $k \in \mathbbm{N}$, it can be checked whether the derivation can be generalized to a variable~$n$. In this case, thanks to \Cref{lem:forwardSoundness}, the derivation can be lifted to produce the more general consequence~$\chcfact{p(\bar x)}{\varphi[\bar x, n] \wedge\break n \geq 0}$. Experimenting with calculi of this kind is left as future work.

\section{Experiments}\label{sec:experiments}
We implemented forward and backward symbolic execution in \eldarica~\citep{DBLP:conf/fmcad/HojjatR18} and evaluate them against \eldarica's CEGAR engine on benchmarks from the CHC-COMP benchmark suite~\citep{DBLP:journals/corr/abs-2404-14923}.

\subsection{Setup}

We evaluate five configurations:
\begin{itemize}
  \item \eldabs: \eldarica's CEGAR engine with default settings.
  \item \eldabsoff: \eldarica's CEGAR engine with interpolation abstraction disabled.
  \item \eldsymexbfs: Forward execution (\texttt{FE}, \Cref{sec:forward}) via constrained positive unit hyper-resolution with BFS strategy.
  \item \eldsymexdfs: Forward execution with DFS strategy. This configuration only supports linear CHCs.
  \item \eldsymexsld: Backward execution (\texttt{BE}, \Cref{sec:backward}) via SLD resolution with BFS strategy. 
\end{itemize}
The symbolic execution configurations do not yet implement subsumption beyond checking whether an identical resolvent already exists.

We evaluate on nine categories from the CHC-COMP 2026 benchmark suite~\citep{DBLP:journals/corr/abs-2404-14923}.\footnote{\url{https://github.com/chc-comp/chc-comp-2026}} The categories are: ADT-LIA, ADT-LIA-Arrays, BV, BV-Lin, LIA, LIA-Arrays, LIA-Lin, LIA-Lin-Arrays, and LRA-Lin. Categories suffixed with ``Lin'' contain only linear CHCs.

Experiments were run on the Pelle cluster at UPPMAX, Uppsala University (80~nodes, AMD EPYC 9454P 48-core Zen4, 768\,GB RAM per node). Each benchmark was allocated 1~CPU core and 6\,GB of memory, with a wall-time limit of 180~seconds.

\subsection{Results and Discussion}
\begin{table}[tb]
  \centering
  \caption{Number of satisfiable\,/\,unsatisfiable instances solved per category. Best values for each category across configurations are shown in \textbf{bold}.
    }
  \label{tbl:results-overview}
  \footnotesize
  \begin{tabular}{lr r@{\,/\,}r r@{\,/\,}r r@{\,/\,}r r@{\,/\,}r r@{\,/\,}r}
    Category & \# & \multicolumn{2}{c}{\eldabs} & \multicolumn{2}{c}{\eldabsoff} & \multicolumn{2}{c}{\eldsymexbfs} & \multicolumn{2}{c}{\eldsymexdfs} & \multicolumn{2}{c}{\eldsymexsld} \\
    \cmidrule(lr){1-1} \cmidrule(lr){2-2} \cmidrule(lr){3-4} \cmidrule(lr){5-6} \cmidrule(lr){7-8} \cmidrule(lr){9-10} \cmidrule(lr){11-12}
    ADT-LIA & 1131 & \textbf{76} & 158 & 55 & \textbf{165} & 11 & 155 & \multicolumn{2}{c}{--} & 20 & 148 \\
    ADT-LIA-Arrays & 3607 & \textbf{2256} & \textbf{1123} & 2254 & 1122 & 1498 & 1070 & \multicolumn{2}{c}{--} & 2030 & 1090 \\
    BV & 96 & \textbf{33} & 42 & 31 & \textbf{43} & 8 & 26 & \multicolumn{2}{c}{--} & 2 & 33 \\
    LIA & 1287 & \textbf{633} & \textbf{378} & 571 & \textbf{378} & 91 & 365 & \multicolumn{2}{c}{--} & 166 & 293 \\
    LIA-Arrays & 1738 & \textbf{1029} & \textbf{684} & \textbf{1029} & \textbf{684} & 709 & 622 & \multicolumn{2}{c}{--} & 955 & 665 \\
    \midrule
    BV-Lin & 512 & \textbf{97} & \textbf{152} & 74 & \textbf{152} & 45 & 148 & 33 & 99 & 28 & \textbf{152} \\
    LIA-Lin & 1406 & \textbf{540} & \textbf{293} & 466 & \textbf{293} & 151 & 282 & 56 & 48 & 195 & 262 \\
    LIA-Lin-Arrays & 139 & \textbf{53} & \textbf{23} & 48 & \textbf{23} & 17 & 22 & 12 & 11 & 14 & 20 \\
    LRA-Lin & 277 & 27 & \textbf{13} & 30 & \textbf{13} & \textbf{34} & \textbf{13} & 2 & 4 & 28 & \textbf{13} \\
  \end{tabular}
\end{table}

\begin{table}[tbp]
  \caption{Comparison matrices for non-linear categories. Each entry shows sat/unsat instances solved by the row configuration but not by the column configuration.}
  \label{tbl:matrices-nonlinear}
  \centering
  \setlength{\tabcolsep}{3pt}
  \footnotesize
  \begin{tabular*}{\textwidth}{@{\extracolsep{\fill}}cc@{}}
      \makebox[0.48\textwidth][c]{ADT-LIA} & \makebox[0.48\textwidth][c]{ADT-LIA-Arrays} \\[.2em]
    \makebox[0.48\textwidth][c]{%
      \begin{tabular}{l cccc}
      & \eldabs & \eldabsoff & \eldsymexbfs & \eldsymexsld \\ \midrule
      \eldabs & --- & 22/1 & 65/13 & 56/14 \\
      \eldabsoff & 1/8 & --- & 44/14 & 35/17 \\
      \eldsymexbfs & 0/10 & 0/4 & --- & 0/10 \\
      \eldsymexsld & 0/4 & 0/0 & 9/3 & --- \\
      \end{tabular}
    }
    &
    \makebox[0.48\textwidth][c]{%
      \begin{tabular}{l cccc}
      & \eldabs & \eldabsoff & \eldsymexbfs & \eldsymexsld \\ \midrule
      \eldabs & --- & 2/1 & 769/56 & 226/34 \\
      \eldabsoff & 0/0 & --- & 767/56 & 224/34 \\
      \eldsymexbfs & 11/3 & 11/4 & --- & 20/8 \\
      \eldsymexsld & 0/1 & 0/2 & 552/28 & --- \\
      \end{tabular}
    }
    \\[4em]
    \makebox[0.48\textwidth][c]{BV} & \makebox[0.48\textwidth][c]{LIA}\\[.2em]
    \makebox[0.48\textwidth][c]{%
      \begin{tabular}{l cccc}
      & \eldabs & \eldabsoff & \eldsymexbfs & \eldsymexsld \\ \midrule
      \eldabs & --- & 3/0 & 25/17 & 31/10 \\
      \eldabsoff & 1/1 & --- & 23/17 & 29/10 \\
      \eldsymexbfs & 0/1 & 0/0 & --- & 7/2 \\
      \eldsymexsld & 0/1 & 0/0 & 1/9 & --- \\
      \end{tabular}
    }
    &
    \makebox[0.48\textwidth][c]{%
      \begin{tabular}{l cccc}
      & \eldabs & \eldabsoff & \eldsymexbfs & \eldsymexsld \\ \midrule
      \eldabs & --- & 68/1 & 542/25 & 469/86 \\
      \eldabsoff & 6/1 & --- & 480/25 & 408/86 \\
      \eldsymexbfs & 0/12 & 0/12 & --- & 41/76 \\
      \eldsymexsld & 2/1 & 3/1 & 116/4 & --- \\
      \end{tabular}
    }
    \\[4em]
    \multicolumn{2}{c}{\makebox[0.48\textwidth][c]{LIA-Arrays}}\\[.2em]
    \multicolumn{2}{c}{\makebox[0.48\textwidth][c]{%
      \begin{tabular}{l cccc}
      & \eldabs & \eldabsoff & \eldsymexbfs & \eldsymexsld \\ \midrule
      \eldabs & --- & 0/0 & 320/62 & 74/19 \\
      \eldabsoff & 0/0 & --- & 320/62 & 74/19 \\
      \eldsymexbfs & 0/0 & 0/0 & --- & 2/5 \\
      \eldsymexsld & 0/0 & 0/0 & 248/48 & --- \\
      \end{tabular}
    }}   
  \end{tabular*}
\end{table}

\begin{table}[tbp]
  \caption{Comparison matrices for linear categories. Entry format as in \Cref{tbl:matrices-nonlinear}.}
  \label{tbl:matrices-linear}
  \centering
  \setlength{\tabcolsep}{3pt}
  \footnotesize
  \begin{tabular*}{\textwidth}{@{\extracolsep{\fill}}cc@{}}
  \makebox[0.48\textwidth][c]{BV-Lin} & \makebox[0.48\textwidth][c]{LIA-Lin} \\[.2em]
    \makebox[0.48\textwidth][c]{%
      \begin{tabular}{l ccccc}
      & \eldabs & \eldabsoff & \eldsymexbfs & \eldsymexdfs & \eldsymexsld \\ \midrule
      \eldabs & --- & 24/0 & 67/7 & 78/56 & 70/3 \\
      \eldabsoff & 1/0 & --- & 47/7 & 54/56 & 46/3 \\
      \eldsymexbfs & 15/3 & 18/3 & --- & 12/53 & 33/3 \\
      \eldsymexdfs & 14/3 & 13/3 & 0/4 & --- & 22/5 \\
      \eldsymexsld & 1/3 & 0/3 & 16/7 & 17/58 & --- \\
      \end{tabular}
    }
    &
    \makebox[0.48\textwidth][c]{%
      \begin{tabular}{l ccccc}
      & \eldabs & \eldabsoff & \eldsymexbfs & \eldsymexdfs & \eldsymexsld \\ \midrule
      \eldabs & --- & 80/3 & 409/42 & 484/245 & 360/49 \\
      \eldabsoff & 6/3 & --- & 338/41 & 413/245 & 294/49 \\
      \eldsymexbfs & 20/31 & 23/30 & --- & 95/234 & 72/27 \\
      \eldsymexdfs & 0/0 & 3/0 & 0/0 & --- & 13/1 \\
      \eldsymexsld & 15/18 & 23/18 & 116/7 & 152/215 & --- \\
      \end{tabular}
    }
    \\[4em]
    \makebox[0.48\textwidth][c]{LIA-Lin-Arrays} & \makebox[0.48\textwidth][c]{LRA-Lin}\\[.2em]
    \makebox[0.48\textwidth][c]{%
      \begin{tabular}{l ccccc}
      & \eldabs & \eldabsoff & \eldsymexbfs & \eldsymexdfs & \eldsymexsld \\ \midrule
      \eldabs & --- & 5/0 & 37/1 & 41/12 & 39/3 \\
      \eldabsoff & 0/0 & --- & 33/1 & 36/12 & 34/3 \\
      \eldsymexbfs & 1/0 & 2/0 & --- & 5/12 & 3/2 \\
      \eldsymexdfs & 0/0 & 0/0 & 0/1 & --- & 0/2 \\
      \eldsymexsld & 0/0 & 0/0 & 0/0 & 2/11 & --- \\
      \end{tabular}
    }
    &
    \makebox[0.48\textwidth][c]{%
      \begin{tabular}{l ccccc}
      & \eldabs & \eldabsoff & \eldsymexbfs & \eldsymexdfs & \eldsymexsld \\ \midrule
      \eldabs & --- & 0/0 & 4/0 & 25/9 & 3/0 \\
      \eldabsoff & 3/0 & --- & 5/0 & 28/9 & 5/0 \\
      \eldsymexbfs & 11/0 & 9/0 & --- & 32/9 & 8/0 \\
      \eldsymexdfs & 0/0 & 0/0 & 0/0 & --- & 0/0 \\
      \eldsymexsld & 4/0 & 3/0 & 2/0 & 26/9 & --- \\
      \end{tabular}
    }
  \end{tabular*}
\end{table}

\Cref{tbl:results-overview} summarizes the number of satisfiable and unsatisfiable instances solved by each configuration. \Cref{tbl:matrices-nonlinear,tbl:matrices-linear} show pairwise comparison matrices: each entry shows the number of satisfiable/unsatisfiable instances solved by the row configuration but not the column configuration.

The results show that forward and backward symbolic execution are often complementary: each strategy solves instances the other cannot in many categories. For satisfiability, \eldsymexsld is often more effective than \eldsymexbfs. %
While we expected symbolic execution to solve some unsatisfiable instances that CEGAR cannot, in many categories some satisfiable instances could only be solved by the symbolic execution configurations. For instance, 11 satisfiable instances in ADT-LIA-Arrays and 20 in LIA-Lin could only be solved by \eldsymexbfs.

DFS (\eldsymexdfs) solves fewer instances overall; however, it occasionally solves unsatisfiable instances that other strategies cannot. This is because DFS can find refutations deep within the branches it explores, but it can also get stuck exploring an infinitely deep branch that does not lead to a refutation.

Overall, the results demonstrate that symbolic execution of CHCs, even without subsumption beyond syntactic identity checks and without search heuristics, is a viable approach for CHC solving, while being substantially simpler to implement than CEGAR or IC3-based algorithms.

\section{Related Work}\label{sec:related}
We discuss the most closely related work on CHC solving algorithms and on resolution in logic programming.

\subsection{CHC Solving and Software Model Checking}
Most CHC solvers rely on algorithms that synthesize invariants.
\spacer~\cite{DBLP:conf/cav/KomuravelliGC14} extends the IC3 framework~\cite{DBLP:conf/vmcai/Bradley11} to CHCs, maintaining over-approximate procedure summaries computed using interpolation, and under-approximate reachability facts computed using model-based projection. \eldarica~\citep{DBLP:conf/fmcad/HojjatR18} implements CEGAR with predicate abstraction~\citep{DBLP:conf/cav/ClarkeGJLV00}. \golem~\citep{DBLP:conf/cav/BlichaBS23} supports several engines, including transition power abstraction (TPA)~\citep{DBLP:conf/tacas/BlichaFHS22}. TPA maintains a sequence of relations where the $n$-th relation over-approximates up to $2^n$ transition steps, computed using interpolation, allowing safety checks at exponentially increasing depth.

Constrained resolution, as studied in this article, avoids invariant synthesis. In forward execution, derived clauses encode reachable states; in backward execution, they encode proof obligations. \eldarica implements both forward and backward symbolic execution of CHCs, currently without subsumption beyond syntactic identity after simplification. \golem implements forward symbolic execution for linear CHCs, following the description in~\cite{DBLP:conf/vmcai/AbdullaLR24}.
In~\cite{DBLP:conf/vmcai/AbdullaLR24}, Abdulla et al. build on top of the symbolic execution engine in \eldarica formalized in the present article, adding graph neural network-guided search heuristics.

Bj{\o}rner et al. discusses both backward and forward derivations for CHCs, and recast $k$-induction as a fold/unfold transformation on linear CHCs~\cite{BjornerGMR15}. Backward resolution with subsumption, as studied in this article, generalizes $k$-induction and applies to non-linear clauses without requiring a separate transformation step.

Acceleration-driven clause learning~\cite{DBLP:conf/sas/FrohnG23},
implemented in the \loat tool~\cite{DBLP:conf/cade/FrohnG22} is a calculus for linear CHCs that is based on resolution. Loop acceleration techniques compute quantifier-free formulas characterizing the $N$-fold closure of a recursive clause, and the resulting accelerated clauses are added to the clause set. Resolution with an accelerated clause simulates arbitrarily many applications of the original recursive clause in a single step.
Our approach shares the property with acceleration-driven clause learning of being refutationally complete and can prove satisfiability, but does not necessarily terminate.

In the context of software model checking, \cite{DBLP:conf/sigsoft/DietschHMNP17} compares two approaches for predicate extraction in CEGAR: one based on Craig interpolation and one based on symbolic execution using strongest postconditions. The symbolic execution-based approach can be competitive, consistent with our finding that constrained resolution can be effective without interpolation.

\subsection{Resolution and Logic Programming}
Constrained resolution~\citep{DBLP:journals/ai/Burckert94} extends first-order resolution~\citep{DBLP:journals/jacm/Robinson65} by replacing syntactic unification with equality constraints over a background theory, delegating satisfiability checks to a constraint solver. In theory resolution~\citep{DBLP:journals/jar/Stickel85}, specialized reasoning procedures are instead incorporated directly into the resolution rule, which resolves on sets of literals whose conjunction is determined $T$-unsatisfiable by a decision procedure for the theory $T$. A survey of CHC analysis and transformation techniques from the constraint logic programming perspective is given in~\cite{DBLP:journals/tplp/AngelisFGHPP22}.

Forward execution corresponds to bottom-up evaluation of logic programs, and backward execution to top-down evaluation. SLD resolution~\citep{DBLP:conf/ifip/Kowalski74,DBLP:journals/jacm/AptE82}, the computational mechanism of Prolog, is a restriction of backward resolution to definite clauses with a selection rule. In Prolog, the goal is typically to compute answer substitutions for a query; extracting invariants or counterexamples is not a concern. In our setting, we are also interested in models and refutation proofs. Models of the CHCs serve as correctness proofs for programs, and refutation proofs correspond to concrete failing executions that can be used as proofs of bugs and for test-case generation. CHCs also involve a background theory, which makes semantic entailment between clauses undecidable in general and requires the subsumption approximations described in \Cref{sec:subsumption}.

\section{Conclusion}\label{sec:conclusion}
We studied constrained resolution as a framework for symbolic execution of CHCs. 
Several directions remain for future work. Concolic execution~\cite{DBLP:conf/pldi/GodefroidKS05} is a natural strategy to consider, but adapting it to CHCs is not straightforward: unlike programs, CHCs do not have a fixed notion of input variables to concretize. One option is to strengthen constraints when the SMT solver cannot decide satisfiability within a time limit. Other search strategies such as iterative deepening and generational search~\cite{DBLP:conf/ndss/GodefroidLM08} could be explored. In many cases checking subsumption could be useful, but the cost of subsumption checks depends on the background theory and can be expensive in practice. Bidirectional strategies that combine forward and backward execution could also be effective, since the two directions are complementary in practice. We hope that the unified view of symbolic execution and resolution over CHCs developed in this article can serve as a foundation for exploring these and other directions.

\bmhead{Acknowledgements}
The work was supported by the Swedish Research Council through grants~2021-06327 and~2025-06185, and by the Ethereum Foundation under grant~FY25-2124. The experimental evaluation was enabled by resources in project UPPMAX~2026/1-126 provided by Uppsala University at UPPMAX.

\bibliography{refs}

\end{document}